\def\DEF{\stackrel{\rm def}{=}}
\newtheorem{theorem}{Theorem}[section]
\newtheorem{lemma}[theorem]{Lemma}
\newtheorem{definition}[theorem]{Definition}
\newtheorem{corollary}[theorem]{Corollary}
\newcommand{\namedref}[2]{\hyperref[#2]{#1~\ref*{#2}}}
\newcommand{\sectionref}[1]{\namedref{Section}{#1}}
\newcommand{\theoremref}[1]{\namedref{Theorem}{#1}}
\newcommand{\defref}[1]{\namedref{Definition}{#1}}
\newcommand{\figureref}[1]{\namedref{Figure}{#1}}
\newcommand{\lemmaref}[1]{\namedref{Lemma}{#1}}
\newcommand{\corollaryref}[1]{\namedref{Corollary}{#1}}
\newcommand{\equalityref}[1]{\hyperref[#1]{Equality~\eqref{#1}}}
\newcommand{\inequalityref}[1]{\hyperref[#1]{Inequality~\eqref{#1}}}
\newcommand{\N}{\mathbb{N}}
\newcommand{\R}{\mathbb{R}}
\newcommand{\BO}{\mathcal{O}}
\newcommand{\SPD}{\mbox{SPD}}
\newcommand{\Congest}{\textsc{congest}\xspace}
\def\CONGEST{\Congest}
\DeclareMathOperator{\Wd}{wd}
\DeclareMathOperator{\WD}{WD}
\newcommand{\Wdp}{\Wd'}
\DeclareMathOperator{\Hd}{hd}
\DeclareMathOperator{\HD}{D}
\newcommand{\Paths}{\mathrm{paths}}
\newcommand{\Set}[1]{\left\{ #1 \right\}}
\newcommand{\Next}{\mathrm{next}}
\newcommand{\rtc}{\textsc{rtc}}
\def\eps{\varepsilon}
\begin{document}

\title{Fast Partial Distance Estimation and Applications}

\author{Boaz Patt-Shamir, Tel Aviv University\\
Christoph Lenzen, Max-Planck Institute for Informatics}

\def\thepage{}
\begin{titlepage}
\maketitle

\begin{abstract}
We study approximate distributed solutions to the weighted
\emph{all-pairs-shortest-paths} (APSP) problem in the \Congest model. We obtain
the following results.
\begin{compactenum}
\item A deterministic $(1+o(1))$-approximation to APSP in $\tilde{\BO}(n)$
rounds. This improves over the best previously known algorithm, by
both derandomizing it and by reducing the running time by a $\Theta(\log
n)$ factor.
\label{enum1}
\end{compactenum}
In many cases, routing schemes involve relabeling, i.e., assigning new
names to nodes and require that these names are used in distance and
routing queries. 
It is known that relabeling  is necessary to achieve running times of
$o(n/\log n)$.  In the relabeling model, we obtain the
following results.
\begin{compactenum}
\setcounter{enumi}{1}
\item 
  A randomized $\BO(k)$-approximation to APSP, for any integer $k>1$, running in
$\tilde{\BO}(n^{1/2+1/k}+D)$ rounds, where $D$ is the hop diameter of the
network. This algorithm simplifies the best previously known result and reduces
its approximation ratio from $\BO(k\log k)$ to $\BO(k)$. Also, the new
algorithm uses
uses labels of asymptotically optimal size, namely $\BO(\log n)$ bits.
\item 
  A randomized $\BO(k)$-approximation to APSP, for any integer $k>1$,
  running in time $\tilde{\BO}((nD)^{1/2}\cdot n^{1/k}+D)$ and producing
  \emph{compact routing tables} of size $\tilde{\BO}(n^{1/k})$. The node lables
  consist of $\BO(k\log n)$ bits. This improves on the approximation ratio of
  $\Theta(k^2)$ for tables of that size achieved by the best previously known
  algorithm, which terminates faster, in $\tilde{\BO}(n^{1/2+1/k}+D)$ rounds.
\end{compactenum}
\end{abstract}
\end{titlepage}
\pagenumbering{arabic}

\section{Introduction}
To allow a network to be useful, it must facilitate routing
messages between nodes. By the very nature of networks, this
computation must be distributed, i.e., there must be a distributed
algorithm that computes the local data structures that support routing
at network junctions (i.e., routing tables at nodes). A trivial
distributed algorithm for 
this purpose is to collect the entire topology at a single location,
apply a centralized algorithm, and distribute the result via the
network. This simplistic approach is costly, in particular if the
available bandwidth is 
limited. To study the distributed time complexity of routing table
computation, we use the view point of
the \Congest model, i.e., we assume that each link in an $n$-node
network allows only for $\BO(\log n)$ bits to be
exchanged in each unit of time.

In this work, we consider networks modeled by weighted undirected
graphs, where the edge weights represent some abstract link cost,
e.g., latency. 
%
Regarding  routing, which is is a generic task with
many variants, we focus on the following specific problems.
\begin{compactitem}
\item \emph{All-pairs distance estimation}: How fast can each node obtain an estimate of the distance to
each other node, and how good is that estimate?
\item \emph{All-Pairs Shortest Paths:} How fast can we construct local
  data structures so that when given a destination node identifier,
  the node can locally determine the 
  next hop on 
a  path to the destination, and what's the \emph{stretch} of the
resulting route w.r.t.\ the shortest path? 
\end{compactitem}
The variants above are graph-theoretic; in modern routing systems, it
is common practice to assign to nodes labels (identifiers) that
contain some routing information. IP addresses, for example, contain a
``network'' part and a ``host'' part, which allow for hierarchical
routing. Thus,  the following
additional questions are also of interest to us.
\begin{compactitem}
\item \emph{Routing Table Construction:} What are the answers to the above
questions if we allow 
\emph{relabeling}, i.e., allow the algorithm to choose (small)
\emph{labels} as node identifier, and require
that distance and routing queries
refer to nodes using these labels?
\item \emph{Compact Routing:} What are the answers to the above
  questions  if we require that
  storage space allocated at the nodes (routing table size) is small?
\end{compactitem}

\paragraph{Background.}

Shortest paths are a central object of study since the dawn of the 
computer era. The Bellman-Ford algorithm \cite{Bellman,Ford-56}, 
although originally developed for centralized optimization, is one of
the few most 
fundamental distributed algorithms. Implemented as RIP, the algorithm 
was used in the early days of the Internet (when it was still called ARPANET) 
\cite{MQ-77}. Measured in terms of the \Congest model, a Bellman-Ford 
all-pairs shortest paths computation in weighted graphs takes 
$\Theta(n^2)$ time in the worst case, and requires $\Theta(n\log n)$
bits of storage 
at each node. Another simple solution to the problem is to collect the 
complete topology at each node (by flooding) and then apply a local 
single-source shortest paths algorithm, such as Dijkstra's. This 
solution has time complexity $\Theta(m)$ and storage complexity 
$\Theta(m)$, where $m$ denotes the number of links in the network. It 
also enjoys improved stability and flexibility, and due to these 
reasons it became the Internet's routing algorithm in its later stages
of  evolution
(see discussion in \cite{ARPANET}). It is standardized as
OSPF~\cite{ospf}, which contains, in addition, provisions for
hierarchical routing.

\noindent\textbf{State of the art and New Results.}
%
Recently there has been a flurry of new results about routing in the
\Congest model. Instead of trying to track them all, let us start by
reviewing known lower bounds, which help placing our
results in the context of what is possible. 
\begin{compactitem}
\item 
Without relabeling, any polylogarithmic-ratio approximation to APSP requires
$\tilde{\Omega}(n)$ rounds~\cite{nanongkai14,lenzen12}.%
\footnote{
Throughout this paper, we use $\tilde{\BO}$-notation, which hides
poly-logarithmic factors. See \sectionref{sec-general}.
}
 This holds also if
tables must only enable either distance estimates or routing, but not both.
\item With node relabeling, any approximation to APSP requires
  $\tilde{\Omega}(\sqrt{n}+D)$ rounds~\cite{dassarma12hardness}, where
  $D$ denotes the \emph{hop diameter} of the network (see \sectionref{sec-graph}. The
  bound holds for both 
  routing and distance queries, and even for $D\in\BO(\log n)$. (However, if routing may be
  \emph{stateful}, i.e., routing decisions may depend on the tables of
  previously visited nodes, no non-trivial lower bound is known; all our routing
  algorithms are stateless.)
\item If the routing table size is $\tilde{\BO}(n^{1/k})$, then the
  approximation ratio of the induced routes is
at least $2k-1$~\cite{abraham08,peleg89trade-off}. (This result does
not hold for stateful routing.) For distance approximation, the same
bound  has been established for the special cases of
$k\in\{1,2,3,5\}$, and is conjectured to hold for any $k$ (see \cite{zwick01} and
references therein).
\item  It is known \cite{holzer14} that any
  randomized $(2-o(1))$-approximation of APSP, and that any
  $(2-o(1))$-approximation of the weighted diameter of a graph takes
  $\tilde\Omega(n)$ time
  in the worst case. 
In the \emph{unweighted}
case, \cite{FHW-12} show an
$\tilde\Omega(n)$ lower bound on the time required to approximate the
diameter to within a $3/2$ factor.
\end{compactitem}

\noindent
Let us now review the best known upper bounds and compare them with
our results.
\begin{compactitem}
\item For any $\varepsilon>0$, we give a \emph{deterministic}
$(1+\varepsilon)$-approximation to APSP that runs in $\BO(\varepsilon^{-2}n \log
n)$ rounds. The best known previous result, due to Nanongkai~\cite{nanongkai14},
achieves the same approximation ratio within $\BO(\varepsilon^{-2}n \log^2 n)$
rounds with high probability---Nanongkai's algorithm is
randomized. We note that independently and concurrently to our work, Holzer and
Pinsker~\cite{holzer14} derived the same algorithm and result, and applied it in
the Broadcast Congested Clique model, in which in each round, each node 
posts a single $\BO(\log n)$-bit message which is delivered to all
other nodes.
\item For any $k\in \N$, we obtain a randomized $(6k-1+o(1))$-approximation to
APSP running in time $\tilde{\BO}(n^{1/2+1/(4k)}+D)$. 
The algorithm succeeds with
high probability (see \sectionref{sec-general}), as do all our
randomized algorithms. This
improves our previous work~\cite{lenzen12} by simplifying it and by reducing
the approximation ratio from $\BO(k\log k)$ to $\BO(k)$. The new
algorithm relabel nodes with labels of $\BO(\log n)$ bits, whereas the
previous one required $\BO(\log n\log k)$-bit labels.
\item For any $k\in \N$, we give a randomized $(4k-3+o(1))$-approximation to
APSP running in time $\tilde{\BO}(\min\{(nD)^{1/2}\cdot
n^{1/k},n^{2/3+2/(3k)}+D)$ with tables of size $\tilde{\BO}(n^{1/k})$. This
improves over the stretch of $\BO(k^2)$ in own previous work~\cite{lenzen12}, at
the cost of increasing the running time (from $\tilde{\BO}(n^{1/2+1/k}+D)$). We
point out, however, that the proposed algorithm is the first that achieves an
asymptotically optimal trade-off between stretch and table size in time
$\tilde{o}(n)$ for all graphs of diameter $D\in \tilde{o}(n)$ and $k>2$.
\end{compactitem}

\noindent\textbf{Technical Discussion.} Our key tool is a generalization of the
\emph{$(S,h,\sigma)$-detection problem}, introduced in
\cite{lenzen13}.\footnote{In the 
original paper, the third parameter is called $k$. We use $\sigma$ here to avoid
confusion with the use of $k$ as the parameter controlling the trade-off between
approximation ratio and table size.} 
The problem is defined as follows. Given a graph  with a distinguished set of
\emph{source nodes} $S$, the task is for each node to find the distances to its
closest $\sigma\in \N$ sources within $h\in \N$ hops (the formal
definition is given in \sectionref{sec-pde-def}). 
 In~\cite{lenzen13} it is shown that 
this task can be solved in $h+\sigma$ rounds on \emph{unweighted}
graphs. The main
new ingredient in all our results is an algorithm that, within a comparable
running time, produces an \emph{approximate}  solution to 
$(S,h,\sigma)$-detection in \emph{weighted} graphs.

Weighted graphs present a significant difficulty, because in weighted graphs, the
number of hops in a shortest (by weight) path between two nodes may be a factor
of $\Theta(n)$ larger than the minimal number of hops on any path connecting the
same two nodes (the Congested Clique provides an extreme example of this
phenomenon). Therefore, naively finding the absolute closest $\sigma$ sources
(w.r.t.\ weighted distance) within $h$ hops may require running time $\Omega(n)$
in the worst case, for any  $h$ and $\sigma$. One may circumvent this difficulty
by replacing the underlying graph metric by \emph{$h$-hop distances}, which for
$v,w\in V$ is defined as the minimum weight of all $v$-$w$-paths that consist of
at most $h$ hops. The collection of $h$-hop distances does not constitute a
metric, but one can solve the $(S,h,\sigma)$-detection problem under $h$-hop
distances in time $\sigma h$ using techniques similar to those used in the
unweighted case~\cite{lenzen12}.

\begin{figure}[t]
\begin{center}
\includegraphics[width=\columnwidth]{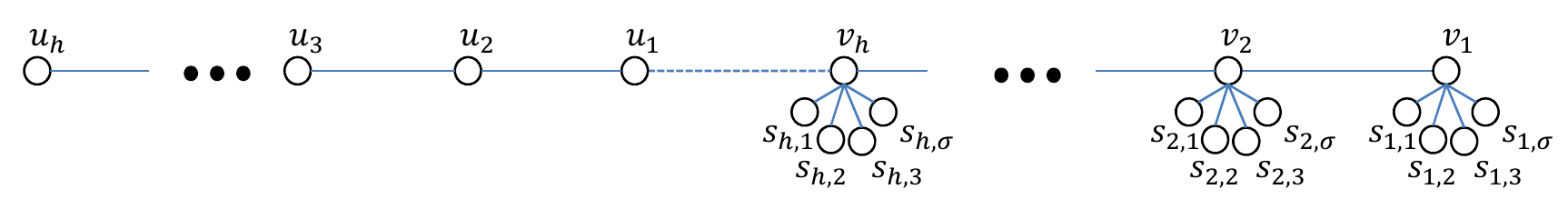}
\end{center}
\caption{\small A graph where $(S,h+1,\sigma)$-detection cannot be solved in
$o(h\sigma)$ rounds. Edge weights are $4ih$ for edges $\{v_i,s_{i,j}\}$ for all
$i\in \{1,\ldots,h\}$ and $j\in \{1,\ldots,\sigma\}$, and $1$ (i.e.,
negligible) for all other edges. Node $u_i$, $i\in \{1,\ldots,h\}$, needs to
learn about all nodes $s_{i,j}$ and distances $\Wd_{h+1}(u_i,s_{i,j})$, where
$j\in \{1,\ldots,\sigma\}$. Hence all this information must traverse the dashed
edge $\{u_1,v_h\}$. The example can be modified to attach the same source set
to each $v_h$. Varying distances, then still $\sigma h=|S| h$ values must be
communicated over the dashed edge. Hence, the special case $|S|=\sigma$ is not
easier.}\label{fig:source_weighted_lower}
\end{figure}

Unfortunately, as illustrated in \figureref{fig:source_weighted_lower}, this
time complexity is optimal in the worst case. To avoid this bottleneck,
Nanongkai~\cite{nanongkai14} used a rounding technique that had previously been
leveraged in the centralized setting~\cite{zwick02}, solving the
problem to within $(1+\eps)$ factor by
essentially reducing the weighted instance to $\BO(\log n /
\varepsilon)$ unweighted instances. In~\cite{nanongkai14}, the idea is
to solve each instance using  breadth-first-search. To avoid
collisions, 
a random delay is applied to the starting time of each instance. The
result is  $(1+\varepsilon)$-approximate distances to all sources in
$\BO((h+|S|)\log^2 n /\varepsilon^2)$ rounds w.h.p. 
We replace this
part of the 
algorithm with the deterministic source detection algorithm
from~\cite{lenzen13}, obtaining a deterministic algorithm of running
time $\BO((h+|S|)\log n)$. This, using $S=V$ and $h=\sigma=n$, has the
immediate corollary of  a 
deterministic $(1+o(1))$-approximation to APSP.

To derive our other results, solving the special case of $\sigma=|S|$ is
insufficient. Consequently, we define  a
$(1+\eps)$-approximate version of the $(S,h,\sigma)$-detection problem
which we call \emph{partial distance estimation}, or \emph{PDE} for
short (see \defref{def-pde}).
The crucial insight is that by combining Nanongkai's and Zwick's
rounding scheme with the algorithm from~\cite{lenzen13}, PDE can be solved
in  $\BO((h+\sigma)\log n /\varepsilon^2)$ rounds, with a bound
of  $\BO(\sigma^2)$ on the number of messages sent by each node
throughout the computation. Exploiting
these properties carefully, we obtain our other results.

\emph{Further ingredients.} Our compact routing schemes are distributed
constructions of the routing hierarchies of Thorup and Zwick~\cite{thorup01}.
These make use of efficient tree labeling schemes presented in the same paper,
which allow for a distributed implementation in time $\tilde{\BO}(h)$ in trees
of depth $h$ if relabeling is permitted. For compact routing table construction,
we continue the Thorup-Zwick construction by simulating the partial distance
estimation algorithm on the \emph{skeleton graph}~\cite{lenzen12}, broadcasting
all messages via a BFS tree. This avoids the quadratic stretch incurred by the
approach in~\cite{lenzen12} due to approximating distances in the skeleton graph
using a \emph{spanner}~\cite{peleg89}, which is constructed by simulating the
Baswana-Sen algorithm~\cite{baswana07}. If compact tables are not required, the
partial distance estimation algorithm enables to collapse the Thorup-Zwick
hierarchy of the lower levels into a single step, giving rise to a
constant 
approximation ratio, and thus removing the multiplicative overhead of
$\BO(\log k)$ 
from~\cite{lenzen12} in constructing non-compact routing tables.

\noindent\textbf{Organization of the paper.} In \sectionref{sec:model} we define
the basic graph-theoretic concepts, problems, and execution model. In
\sectionref{sec:pde} we describe the algorithm for partial distance estimation.
In \sectionref{sec:apps} we present the improved results we derive using the
algorithm for PDE.

\section{Model and Problem}
\label{sec:model}

\subsection{Execution Model}
We follow the $\CONGEST$ model as described by \cite{Peleg:book}. The
distributed system is represented by a simple, connected weighted graph
$G=(V,E,W)$, where $V$ is the set of nodes, $E$ is the set of edges, and $W:E\to
\N$ is the edge weight function. As a convention, we
use $n$ to denote the number of nodes. We assume that all edge weights are
bounded by some polynomial in $n$, and that each node $v\in V$ has a unique
identifier of $\BO(\log n)$ bits (we use $v$ to denote both the node and its
identifier).

Execution proceeds in global synchronous rounds, where
in each round, each node takes the following three steps:
\begin{compactenum}
\item Perform local computation,
\item send messages to neighbors, and
\item receive the messages sent by neighbors.
\end{compactenum}
Initially, nodes are aware only of their neighbors; input values (if any) are
assumed to be fed by the environment before the first round. Throughout this
paper, we assume that node $v$ is given the weight of each edge $\{v,w\}\in E$
as input. Output values, which are computed at the end of the final round, are
placed in special output-registers. In each round, each edge can carry a message
of $B$ bits for some given parameter $B$ of the model; we assume that $B\in
\Theta(\log n)$ throughout this paper.

\subsection{Graph-Theoretic Concepts} 
\label{sec-graph}

Fix a weighted undirected graph $G=(V,E,W)$. A \emph{path} $p$ connecting
$v,w\in V$ is a sequence of nodes $\langle v=v_0,\ldots,v_k=w\rangle$ such that
for all $0\le i<k$, $\{v_i,v_{i+1}\}$ is an edge in $G$.
Let $\Paths(v,w)$ denote the set of all paths connecting nodes $v$ and $w$.  We
use the following unweighted concepts.
\begin{compactitem}
\item The \emph{hop-length} of a path $p$, denoted $\ell(p)$, 
is the number of edges in it.
\item A path $p_0$ between $v$ and $w$ is a {\em shortest unweighted path}
if its hop-length $\ell(p_0)$ is minimum among all $p\in\Paths(v,w)$.
\item The \emph{hop distance} $\Hd:V\times V\to \N_0$ is defined as 
the hop-length of a shortest unweighted path,
$\Hd(v,w):=\min\{\ell(p)\mid{p\in\Paths(v,w)}\}$.
\item The \emph{hop-diameter} of a graph $G=(V,E,W)$ is
$\HD\DEF\max_{v,w\in V}\{\Hd(v,w)\}$.
\end{compactitem}
We use the following weighted concepts.
\begin{compactitem}
\item The \emph{weight} of a path $p$, denoted $W(p)$, is its total
edge weight, i.e., $W(p)\DEF\sum_{i=1}^{\ell(p)} W(v_{i-1},v_i)$.
\item A path $p_0$ between $v$ and $u$ is a {\em shortest weighted path}
if its weight $W(p_0)$ is minimum among all $p\in\Paths(v,w)$.
\item The \emph{weighted distance} $\Wd:V\times V\to \N$
is defined as the weight of a shortest weighted path, 
$\Wd(v,u)\DEF\min\{W(p)\mid{p\in\Paths(v,u)}\}$.
\item The \emph{weighted diameter} of $G$ is
$\WD\DEF\max\{\Wd(v,u)\mid{v,u\in V}\}$.
\end{compactitem}

Finally, we define the notion of the \emph{shortest
paths distance}. 
\begin{compactitem}
\item A path $p_0$ between $v$ and $w$ is a {\em minimum-hop shortest weighted 
path} if its hop-length $\ell(p_0)$ is minimum among all shortest weighted 
paths connecting $v$ and $w$. The number of hops in such a path $h_{v,w}$ is
the \emph{shortest path distance} of $v$ and $w$.
\item
The {\em shortest path diameter} of $G$ is
$\SPD\DEF\max\Set{h_{v,w}\mid v,w\in V}$. 
\end{compactitem}

\subsection{Routing}

In the \emph{routing table construction} problem (abbreviated $\rtc$), the
output at each node $v$ consists of (i) a unique \emph{label} $\lambda(v)$ and
(ii) a function ``$\Next_v$'' that takes a destination label $\lambda$ and
produces a neighbor of $v$, such that given the label $\lambda(w)$ of any node
$w$, and starting from any node $v$, it is possible to reach $w$ from $v$ 
by following the $\Next$ pointers. Formally, the requirement is as follows.
Given a start node $v$ and a destination label $\lambda(w)$, let $v_0=v$ and define
$v_{i+1}=\Next_{v_i}(\lambda(w))$ for $i\ge 0$. Then $v_i=w$ for some $i$.

The performance of a solution is measured in terms of its \emph{stretch}. 
A route is said to have stretch $\rho\ge1$ if its total weight is $\rho$ times 
the weighted distance between its endpoints, and a solution to $\rtc$ is said to
have stretch $\rho$ if the maximal stretch of all its induced routes is $\rho$.

\noindent\textbf{Variants.} Routing appears in many incarnations. We list a few
important variants below.

\emph{Name-independent routing.} Our definition of $\rtc$ allows for node
relabeling. This is the case, as mentioned above, in the Internet. The case
where no such relabeling is allowed (which can be formalized by requiring
$\lambda$ to be the identity function), was introduced in \cite{ABLP89} as 
\emph{name-independent} routing. 
Note that any scheme can be trivially transformed into a
name-independent one by announcing all node/label pairs, but this naive
approach requires to broadcast and store $\Omega(n\log n)$ bits.

\emph{Stateful routing.} The routing problem as defined above is
\emph{stateless} in the sense that routing a packet is done regardless of the
path it traversed so far. One may also consider \emph{stateful} routing, where
while being routed, a packet may gather information that helps it navigate
later. (One embodiment of this idea in the Internet routing today is MPLS, where
packets are temporarily piggybacked with extra headers. Early solutions to the
compact routing problem, such as that of \cite{PU89}, were also based on
stateful routing.) Note that the set of routes to a single destination in
stateless routing must constitute a tree, whereas in stateful routing even a
single route may contain a cycle. Formally, in stateful routing the label of the
destination may change from one node to another: The $\Next_v$ function outputs
both the next hop (a neighbor node), and a new label $\lambda_v$ used in the
next hop.

\subsection{Distance Approximation}\label{sec:dist_def}

The \emph{distance approximation} problem is closely related to the routing
problem. Again, each node $v$ outputs a label $\lambda(v)$, but now, $v$ needs
to construct a function $\mathrm{dist}_v: \lambda(V)\to \R^+$ (the table) such
that for all $w\in V$ it holds that $\mathrm{dist}_v(\lambda(w))\geq \Wd(v,w)$.
The stretch of the approximation $v$ has of $\Wd(v,w)$ is
$\mathrm{dist}_v(\lambda(w))/\Wd(v,w)$, and the solution has stretch $\rho\geq
1$ if $\max_{v,w\in V}\{\mathrm{dist}_v(\lambda(w))/\Wd(v,w)\}= \rho$.

Similarly to routing, we call a scheme name-independent if $\lambda$
is the identity function. Since we require distance estimates to be
produced without communication, there is no ``stateful'' distance
approximation.

\subsection{Partial Distance Estimation (PDE)}
\label{sec-pde-def}
The basic problem we attack in this paper is partial distance
estimation, which generalizes the source detection problem. Let us
start by defining the simpler  variant.

Given a set of nodes $S\subseteq V$ and a parameter $h\in\N$, $L_v^{(h)}$
denotes the list resulting from ordering the set $\{(\Wd(v,w),w)\,|\,w\in
S\wedge h_{v,w}\leq h\}$ lexicographically in ascending order, i.e.,
\begin{equation*}
(\Wd(v,w),w)<(\Wd(v,u),u)\Leftrightarrow \big((\Wd(v,w)<\Wd(v,u))\vee
(\Wd(v,w)=\Wd(v,u)\wedge w<u)\big).
\end{equation*}
\begin{definition}[$(S,h,\sigma)$-detection] 
Given are a set of \emph{sources} $S\subseteq V$ and parameters $h,\sigma\in
\N$. Each node is assumed to know $h$, $\sigma$, and whether it is in $S$ or
not. The goal is to compute at each node $v\in V$ the list $L_v$ of the top
$\sigma$ entries in $L_v^{(h)}$, or the complete $L_v^{(h)}$ if $|L_v^{(h)}|\leq
\sigma$.
\end{definition}

Relaxing this by allowing approximation to within $(1+\eps)$, we arrive at the
following definition.
\begin{definition}[Partial Distance Estimation (PDE)]
\label{def-pde}
Given $S\subseteq V$, $h,\sigma\in \N$, and $\varepsilon>0$,
\emph{$(1+\varepsilon)$-approximate $(S,h,\sigma)$-estimation} is defined as
follows. Determine a \emph{distance function} $\Wd':V\times S\to \N\cup \infty$
satisfying
\begin{compactitem}
\item $\forall v\in V,s\in S:\,\Wd'(v,s)\geq \Wd(v,s)$, where $\Wd$ is the
weighted distance from $v$ to $s$, and
\item if $h_{v,s}\leq h$, then $\Wd'(v,s)\leq (1+\varepsilon)\Wd(v,s)$.
\end{compactitem}
For each $v\in V$, sort the set $\{(\Wd'(v,s),s)\,|\,s\in S\}$ in
ascending lexicographical order. Each node $v$ needs to output the prefix $L_v$
of the sorted list consisting of the first (up to) $\sigma$ elements with
$\Wd'(v,s)< \infty$.
\end{definition}
This generalizes the source detection problem in that setting $\varepsilon=0$
and choosing $\Wd'$ as $h$-hop distances results in an exact weighted version of
the source detection problem, and, specializing further to unweighted graphs,
$h$-hop distances just become hop distances to nodes within $h$ hops.

\subsection{General Concepts}
\label{sec-general}
We use extensively ``soft'' asymptotic notation that ignores
polylogarithmic factors. Formally, $g(n)\in \tilde \BO(f(n))$ 
if and only if there exists a constant $c\in \R^+_0$ such that
$f(n)\leq g(n)\log^c n$ for all but finitely many values of $n\in \N$.
Analogously, 
\begin{compactitem}
\item $f(n)\in \tilde\Omega(g(n))$ if and only if $g(n)\in
\tilde{\BO}(f(n))$,
\item $\tilde{\Theta}(f(n))\DEF\tilde\BO(f(n))\cap
\tilde\Omega(f(n))$,
\item $g(n)\in \tilde{o}(f(n))$ if and only if for each $c\in \R^+_0$ it
holds that $\lim_{n\to \infty}g(n)\log^c(f(n))/f(n)=0$, and
\item $g(n)\in \tilde{\omega}(f(n))$ if and only if $f(n)\in \tilde{o}(g(n))$.
\end{compactitem}

To model probabilistic computation, we assume that each node has access to an
infinite string of independent unbiased random bits.  When we say that a certain
event occurs ``with high probability'' (abbreviated ``w.h.p.''), we mean that
the probability of the event not occurring can be set to be less than $1/n^c$
for any desired constant $c$, where the probability is taken over the strings of
random bits. Due to the union bound, this definition entails that any polynomial
number of events that occur w.h.p.\ also jointly occur w.h.p. We will make
frequent use of this fact throughout the paper.

\section{From Weighted to Unweighted}\label{sec:pde}
Fix $0<\varepsilon\in \BO(1)$. Using the technique of
Nanongkai~\cite{nanongkai14}, we reduce PDE to
$\BO(\log_{1+\varepsilon}\WD)$ instances of the unweighted problem as 
follows.
Let $i_{\max}\DEF\log_{1+\varepsilon}w_{\max}$, where $w_{\max}$ is the 
largest edge weight in $G$. Note that since we assume that edge weights 
are polynomial in $n$, $i_{\max}\in
\BO(\log_{1+\varepsilon} n)$. Clearly $i_{\max}$ can be determined in 
$\BO(D)$ rounds.

For $i\in \{1,\ldots,i_{\max}\}$, let $b(i)\DEF(1+\varepsilon)^i$, and
define $W_i:E\to b(i)\cdot\N$ by 
$W_i(e)\DEF b(i)\lceil 
W(e)/b(i)\rceil$,
i.e., by rounding up edge weights to integer multiples of $(1+\varepsilon)^i$.
Denote by $\Wd_i$ the resulting distance function, i.e., the distance function
of the graph $(V,E,W_i)$. Then the following crucial property holds.
\begin{lemma}[adapted from \cite{nanongkai14}]\label{lemma:round}
For all $v,w\in V$ and 
\begin{equation*}
i_{v,w}:=\max\left\{0,\left\lfloor
\log_{1+\varepsilon}\left(\frac{\varepsilon\Wd(v,w)}{h_{v,w}}\right)
\right\rfloor\right\},
\end{equation*}
it holds that
\begin{equation*}
\Wd_{i_{v,w}}(v,w)< (1+\varepsilon)\Wd(v,w)\in
\BO\left(\frac{b(i_{v,w})h_{v,w}}{\varepsilon}\right).
\end{equation*}
\end{lemma}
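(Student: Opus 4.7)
The plan is the standard path-comparison argument. Take a minimum-hop shortest weighted $v$-$w$-path $p=\langle v_0,\ldots,v_{h_{v,w}}\rangle$, so that $W(p)=\Wd(v,w)$ and $\ell(p)=h_{v,w}$, and bound $\Wd_{i_{v,w}}(v,w)$ by the weight of $p$ measured in the rounded graph $(V,E,W_{i_{v,w}})$. The single inequality driving everything is the per-edge rounding bound $W_i(e)=b(i)\lceil W(e)/b(i)\rceil\leq W(e)+b(i)$. Summing over the $h_{v,w}$ edges of $p$ yields
\[
\Wd_i(v,w)\;\leq\;\sum_{j=1}^{h_{v,w}}W_i(v_{j-1},v_j)\;\leq\;W(p)+h_{v,w}\,b(i)\;=\;\Wd(v,w)+h_{v,w}\,b(i).
\]

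Specializing to $i=i_{v,w}$, the multiplicative bound follows directly from the floor definition: when $i_{v,w}>0$, we have $b(i_{v,w})=(1+\varepsilon)^{i_{v,w}}\leq \varepsilon\Wd(v,w)/h_{v,w}$, so $h_{v,w}\,b(i_{v,w})\leq \varepsilon\Wd(v,w)$ and hence $\Wd_{i_{v,w}}(v,w)\leq (1+\varepsilon)\Wd(v,w)$. The $\BO(\cdot)$ statement uses the matching upper bound on the floor: $(1+\varepsilon)^{i_{v,w}+1}>\varepsilon\Wd(v,w)/h_{v,w}$ gives $\Wd(v,w)<(1+\varepsilon)\,b(i_{v,w})\,h_{v,w}/\varepsilon$, whence $(1+\varepsilon)\Wd(v,w)\in \BO(b(i_{v,w})\,h_{v,w}/\varepsilon)$. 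Both conclusions of the lemma then follow in this regime from a single application of the edge-sum bound plus trivial algebra.

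The one genuinely fiddly part is the clipped case $i_{v,w}=0$, which is forced precisely when $\varepsilon\Wd(v,w)/h_{v,w}<1+\varepsilon$, i.e.\ $\Wd(v,w)<(1+\varepsilon)h_{v,w}/\varepsilon$. Here $b(0)=1$ and the edge-sum bound reads $\Wd_0(v,w)\leq \Wd(v,w)+h_{v,w}$; combined with the upper bound on $\Wd(v,w)$ this lies in $\BO(h_{v,w}/\varepsilon)=\BO(b(0)h_{v,w}/\varepsilon)$, and $(1+\varepsilon)\Wd(v,w)$ itself lies in the same $\BO$-class, so the second assertion of the lemma still holds. The main obstacle I foresee is making the two conclusions fit cleanly under the one definition of $i_{v,w}$ in spite of the clip, and being careful about strict versus non-strict inequalities in the ceiling estimate when $W(e)$ happens to be an integer multiple of $b(i)$; this is purely bookkeeping, handled by noting that the slack $(1+\varepsilon)\Wd(v,w)-\Wd_{i_{v,w}}(v,w)$ is strictly positive as soon as $b(i_{v,w})<\varepsilon\Wd(v,w)/h_{v,w}$, which is the generic situation outside a measure-zero set of parameters.
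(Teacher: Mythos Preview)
Your approach is the same as the paper's: bound $\Wd_i(v,w)$ via a minimum-hop shortest path and the per-edge estimate $W_i(e)<W(e)+b(i)$, then plug in the definition of $i_{v,w}$. Two small points deserve correction.

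First, the clipped case $i_{v,w}=0$ is handled more cleanly than you suggest, and your treatment leaves the \emph{first} assertion unproved there. Since $b(0)=1$ and $W:E\to\N$, rounding up to integer multiples of $1$ does nothing: $W_0=W$, hence $\Wd_0=\Wd$ and $\Wd_0(v,w)=\Wd(v,w)<(1+\varepsilon)\Wd(v,w)$ trivially. Your edge-sum bound $\Wd_0(v,w)\leq \Wd(v,w)+h_{v,w}$ does \emph{not} yield $\Wd_0(v,w)<(1+\varepsilon)\Wd(v,w)$ in this regime (indeed $h_{v,w}\leq \varepsilon\Wd(v,w)$ can fail here), so you need the observation $\Wd_0=\Wd$; this is exactly what the paper uses.

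Second, your last paragraph's worry about strictness is misplaced, and the ``measure-zero'' remark is incorrect as stated: the lemma asserts a strict inequality unconditionally, not generically. The point is simply that $\lceil x\rceil<x+1$ holds for every real $x$, so $W_i(e)<W(e)+b(i)$ is always strict; summing over at least one edge gives $\Wd_i(v,w)<\Wd(v,w)+h_{v,w}b(i)$ strictly, and then the non-strict bound $h_{v,w}b(i_{v,w})\leq \varepsilon\Wd(v,w)$ (from the floor) suffices. No genericity argument is needed or valid here.
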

\begin{proof}
W.l.o.g., $i_{v,w}\neq 0$, as $b_0=1$ and hence $\Wd_0=\Wd$. The choice of
$i_{v,w}$ yields that
\begin{equation*}
\Wd_{i_{v,w}}(v,w)< \Wd(v,w)+b(i_{v,w})h_{v,w}\leq
(1+\varepsilon)\Wd(v,w).
\end{equation*}
To see the second bound, note that by definition of $i_{v,w}$ and 
$b(i_{v,w})$,
\begin{equation*}
\Wd(v,w)\leq \frac{(1+\varepsilon)b(i_{v,w})h_{v,w}}{\varepsilon}.
\end{equation*}
Due to the previous inequality and the constraint that $\varepsilon\in \BO(1)$,
the claim follows.
\end{proof}
Next, let $G_i$ be the \emph{unweighted} graph obtained by replacing 
each edge $e$ in
$(V,E,W_i)$ by a path of $W_i(e)/b(i)$ edges. Let $\Hd_i(v,w)$ denote 
the
distance (minimal number of hops) between $v$ and $w$ in $G_i$. The 
previous lemma implies that in
$G_{i_{v,w}}$, the resulting hop distance between $v$ and $w$ is not too large.
\begin{corollary}\label{coro:round}
For each $v,w\in V$, it holds that $\Hd_{i_{v,w}}(v,w)\in
\BO(h_{v,w}/\varepsilon)$.
\end{corollary}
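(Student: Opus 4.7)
The plan is to reduce the statement to a direct application of \lemmaref{lemma:round} by first reconciling hop-distances in $G_{i_{v,w}}$ with weighted distances in $(V,E,W_{i_{v,w}})$. By construction, $G_i$ subdivides each edge $e$ into $W_i(e)/b(i)$ unit-length edges, so any path $p$ in $(V,E,W_i)$ corresponds to a path in $G_i$ whose hop-length equals $W_i(p)/b(i)$, and conversely any $v$-$w$-path in $G_i$ projects to a path in $(V,E,W_i)$ of the same total weight. Since $W_i$ takes values in $b(i)\cdot \N$ (edge weights are rounded up to integer multiples of $b(i)$), the minimum over these two sets is achieved simultaneously, which yields the identity
\begin{equation*}
\Hd_i(v,w)=\frac{\Wd_i(v,w)}{b(i)}.
\end{equation*}

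First I would write down this identity carefully for $i=i_{v,w}$. Then I would invoke \lemmaref{lemma:round}, which asserts that $\Wd_{i_{v,w}}(v,w)\in \BO(b(i_{v,w})h_{v,w}/\varepsilon)$. Dividing through by $b(i_{v,w})$ (which is positive) gives $\Hd_{i_{v,w}}(v,w)\in \BO(h_{v,w}/\varepsilon)$, as claimed.

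The only substantive point to check is the identity between hop-distance in the subdivided graph and weighted distance in the rounded graph; once this is in place, the corollary follows immediately from the preceding lemma, so there is no real obstacle. Care should be taken only to note that $W_i(e)/b(i)$ is a positive integer by the definition $W_i(e)=b(i)\lceil W(e)/b(i)\rceil$, so the subdivision is well-defined, and that the subdivision preserves shortest paths in the claimed sense.
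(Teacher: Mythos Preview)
Your proof is correct and follows essentially the same approach as the paper: invoke \lemmaref{lemma:round} to bound $\Wd_{i_{v,w}}(v,w)$ and then divide by $b(i_{v,w})$, using that hop distances in $G_i$ correspond to weighted distances in $(V,E,W_i)$ scaled down by $b(i)$. The paper's version is terser---it simply asserts that ``edge weights are scaled down by factor $b(i_{v,w})$''---whereas you spell out the subdivision correspondence explicitly, but the underlying argument is identical.
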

\begin{proof}
By \lemmaref{lemma:round}, $\Wd_{i_{v,w}}(v,w)\in
\BO(b(i_{v,w})h_{v,w}/\varepsilon)$. As edge weights are scaled down by factor
$b(i_{v,w})$ in $G_{i_{v,w}}$, this implies $\Hd_{i_{v,w}}(v,w)\in
\BO(h_{v,w}/\varepsilon)$.
\end{proof}
These simple observations imply that an efficient algorithm for unweighted
source detection can be used to solve partial distance estimation at the cost of
a small increase in running time.
\begin{theorem}\label{theorem:detection}
Any deterministic algorithm for unweighted $(S,h,\sigma)$-detection with running
time $R(h,\sigma)$ can be employed to solve $(1+\varepsilon)$-approximate
$(S,h,\sigma)$-estimation in $\BO(\log_{1+\varepsilon} n\cdot R(h',\sigma)+D)$
rounds, for some $h'\in \BO(h/\varepsilon)$.
\end{theorem}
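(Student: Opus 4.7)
The plan is to reduce weighted PDE to $i_{\max}+1 \in \BO(\log_{1+\varepsilon} n)$ instances of unweighted source detection, one per graph $G_i$, each run with parameters $h' \in \BO(h/\varepsilon)$ (using the constant from \corollaryref{coro:round}) and $\sigma$. First, I would determine $i_{\max}$ in $\BO(D)$ rounds by aggregating the maximum edge weight on a BFS tree rooted at an arbitrary node and broadcasting the result.

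Next, for each $i \in \{0,\ldots,i_{\max}\}$, I would simulate the unweighted $(S,h',\sigma)$-detection algorithm on $G_i$ using only the real edges of $G$. Every original edge $e=\{u,v\}$ is replaced in $G_i$ by a unit-weight path through $k-1$ virtual nodes $x_1,\ldots,x_{k-1}$ with $k=W_i(e)/b(i)$. To carry out the simulation cheaply, I would assign $x_1,\ldots,x_{\lfloor k/2\rfloor}$ to $u$ and the rest to $v$; each endpoint maintains its virtual nodes' states locally, and the single ``virtual edge'' crossing the midpoint is implemented by one $\BO(\log n)$-bit message per round on the real edge $e$. Identifiers for the virtual nodes fit into $\BO(\log n)$ bits because edge weights, and hence the node count of $G_i$, are polynomial in $n$. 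Each instance therefore runs in $R(h',\sigma)$ rounds of $G$, and sequentially executing all $i_{\max}+1$ instances together with the initialization yields total time $\BO(\log_{1+\varepsilon} n \cdot R(h',\sigma) + D)$.

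Every instance delivers to node $v$ the top $\sigma$ entries of $L_v^{(h')}$ in $G_i$, from which $v$ extracts $\Wd_i(v,s) = b(i)\Hd_i(v,s)$; these upper-bound $\Wd(v,s)$ because $W_i \geq W$. I would then define $\Wd'(v,s)$ as the minimum of $\Wd_i(v,s)$ over instances $i$ that delivered $s$ to $v$, and (for the purposes of \defref{def-pde} only, not for computation) extend $\Wd'(v,s)$ to $(1+\varepsilon)\Wd(v,s)$ or $\infty$ on undelivered sources depending on whether $h_{v,s}\leq h$. \lemmaref{lemma:round} combined with \corollaryref{coro:round} guarantees that whenever $h_{v,s}\leq h$, instance $i_{v,s}$ places $s$ within $h'$ hops of $v$ in $G_{i_{v,s}}$ and, if $s$ is delivered, supplies the estimate $\Wd_{i_{v,s}}(v,s) \leq (1+\varepsilon)\Wd(v,s)$, matching both conditions of the definition.

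The hard part will be showing that $v$'s output list $L_v$, assembled from its delivered candidates alone, equals the top $\sigma$ of $\Wd'$. The worry is that a source $s$ with $h_{v,s}\leq h$ could be crowded out of the top $\sigma$ in instance $i_{v,s}$, leaving $v$ ignorant of its estimate. If that happens, there must exist $\sigma$ competitors $s'$ with $(\Hd_{i_{v,s}}(v,s'),s') < (\Hd_{i_{v,s}}(v,s),s)$ in lex order; each such $s'$ is delivered to $v$ and satisfies $\Wd'(v,s') \leq \Wd_{i_{v,s}}(v,s') \leq \Wd_{i_{v,s}}(v,s) \leq (1+\varepsilon)\Wd(v,s) = \Wd'(v,s)$, with the lex comparison strict. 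Hence $s$ cannot appear in the top $\sigma$ of $\Wd'$, and $v$'s sorted pool of delivered candidates (ties broken by source identifier) yields the correct $L_v$.
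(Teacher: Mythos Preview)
Your approach is the same as the paper's: run unweighted $(S,h',\sigma)$-detection on each $G_i$ and merge, and your crowding-out argument is exactly the paper's Property~4. The virtual-node simulation of $G_i$ over the edges of $G$ is a correct implementation detail that the paper leaves implicit.

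There is, however, a technical gap in your choice of witness $\Wd'$. You set $\Wd'(v,s)$ to the delivered minimum whenever $s$ is delivered, and patch it to $(1+\varepsilon)\Wd(v,s)$ only when $s$ is \emph{undelivered} with $h_{v,s}\le h$. But a source $s$ with $h_{v,s}\le h$ can be delivered in some instance $j\neq i_{v,s}$ while being crowded out of instance $i_{v,s}$; being within $h'$ hops in $G_{i_{v,s}}$ does not guarantee that $s$ makes that instance's top~$\sigma$. For such $s$ your $\Wd'(v,s)=\min_{j\text{ delivered}}\Wd_j(v,s)$ may exceed $(1+\varepsilon)\Wd(v,s)$, so your $\Wd'$ does not satisfy the second bullet of \defref{def-pde}. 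The paper avoids this by taking as witness
\[
\Wdp(v,s):=\inf\{b(i)\Hd_i(v,s)\mid \Hd_i(v,s)\le h'\},
\]
which trivially satisfies both bullets, and then proves that the computed list equals the top~$\sigma$ of $\Wdp$. The proof is your crowding-out argument, but applied whenever the delivered minimum strictly exceeds $\Wdp(v,s)$ (i.e., whenever $s$ was not delivered in the instance attaining the infimum), not only when $s$ was never delivered at all. With this corrected witness your argument goes through verbatim.
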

\begin{proof}
Let $\cal A$ be any deterministic algorithm for unweighted
$(S,h,\sigma)$-detection with running time $R(h,\sigma)$. We use the following
algorithm for partial distance estimation.
\begin{compactenum}
	\item Let $h'$ be such that $\Hd_{i_{v,s}}(v,s)\leq h'$ for all 
	$v\in V$ and $s\in S$ with $h_{v,s}\leq h$. By \corollaryref{coro:round}, there
	is such an $h'\in \BO(h/\varepsilon)$.
	\item For $i\in \{1,\ldots,i_{\max}\}$, solve $(S,h',\sigma)$-detection on
	$G_i$ by $\cal A$. Denote by $L_{v,i}$ the computed list.
	\item For $s\in S$, define
	\begin{equation*}
    \tilde{\Wd}(v,s)\DEF\inf \{\Hd_i(v,s)b(i)\,|\,0\le i\le
    i_{\max}\wedge (\Hd_i(v,s),s)\in L_{v,i}\}.
	\end{equation*}
	Note that if $\tilde{\Wd}(v,s)<\infty$, then $v$ can determine $s$ and
	$\tilde{\Wd}(v,s)$ from the previous step. Each node $v$ outputs the list
	$L_v$ consisting of the (up to) first $\sigma$ elements of the set
	$\{(\tilde{\Wd}(v,s),s)\,|\,\tilde{\Wd}(v,s)<\infty\}$, with respect to
	ascending lexicographical order.
\end{compactenum}
Clearly, the resulting running time is the one stated in the claim of the
theorem.

In order to show that the output is feasible, i.e., satisfies the guarantees of
partial distance estimation with approximation ratio $1+\varepsilon$, define
\begin{equation*}
\Wdp(v,s)\DEF\inf \{\Hd_i(v,s)b(i)\,|\,0\le i\le i_{\max}\wedge \Hd_i(v,s)\leq
h'\}.
\end{equation*}
We claim that the required properties are satisfied with respect to $\Wdp$
and that the list returned by $v$ is the one induced by $\Wdp$, which will
complete the proof. The claim readily follows from the following properties.
\begin{compactenum}
\item $\forall v\in V,s\in S:~\Wdp(v,s)\geq \Wd(v,s)$,
\item $\forall v\in V,s\in S:~h_{v,s}\leq h \Rightarrow \Wdp(v,s)\leq
(1+\varepsilon)\Wd(v,s)$,
\item $\forall v\in V,s\in S:~(\tilde{\Wd}(v,s),s)\geq (\Wdp(v,s),s)$, and
\item $\forall v\in V, (\tilde{\Wd}(v,s),s)\in L_v:~\tilde{\Wd}(v,s)=\Wdp(v,s)$.
\end{compactenum}
Hence, it remains to show these four properties.
\begin{compactenum}
\item By definition, $b(i)\Hd_i(v,s)=\Wd_i(v,s)\geq \Wd(v,s)$ for all $v\in V$
and $s\in S$.
\item By the first step, $h_{v,s}\leq h \Rightarrow h_{i_{v,s}}(v,s)\leq h'$.
Hence, $\Wdp(v,s)\leq b(i_{v,s})h_{i_{v,s}}(v,s)=\Wd_{i_{v,s}}(v,s)
<(1+\varepsilon)\Wd(v,s)$ by \lemmaref{lemma:round}.
\item This trivially holds, because $(\Hd_i(v,s),s)\in L_{v,i}$ implies
that $\Hd_i(v,s)\leq h'$ (we executed $(S,h',\sigma)$-detection on each $G_i$),
i.e., $\tilde{\Wd}(v,s)$ is an infimum taken over a subset of the set used for
$\Wdp(v,s)$.
\item Assume for contradiction that $(\tilde{\Wd}(v,s),s)\in L_v$, yet
$\tilde{\Wd}(v,s)>\Wdp(v,s)$ (by the previous property
$\tilde{\Wd}(v,s)<\Wdp(v,s)$ is not possible). Choose $i$ such that
$b(i)\Hd_i(v,s)=\Wdp(v,s)$ and $\Hd_i(v,s)\leq h'$. We have that
$(\Hd_i(v,s),s)\notin L_{v,i_{v,s}}$, as otherwise we had $\tilde{\Wd}(v,s)\leq
b(i)\Hd_i(v,s)=\Wdp(v,s)$. It follows that $|L_{v,i}|= \sigma$ and,
for each $(\Hd_i(v,t),t)\in L_{v,i}$, we have that
\begin{equation*}
(\tilde{\Wd}(v,t),t)\leq (b(i)\Hd_i(v,t),t)<(b(i)\Hd_i(v,s),s)=(\Wdp(v,s),s)
\leq (\tilde{\Wd}(v,s),s),
\end{equation*}
where in the final step we exploit the third property. As there are $\sigma$
distinct such sources $\sigma$, we arrive at the contradiction that
$(\tilde{\Wd}(v,s),s)\notin L_v$.\qedhere
\end{compactenum}
\end{proof}

\begin{lemma}\label{lemma:few_messages}
Consider the unweighted source detection algorith of~\cite{lenzen13}. The number
of messages a node $v$ broadcasts (i.e., sends to all neighbors) until it
announces the first up to $\sigma$ elements from $L_v^{(h)}$ is at most
$\BO(\sigma^2)$. Moreover, the solution to PDE given in
\theoremref{theorem:detection} can be implemented so that each node sends
$\tilde{\BO}(\sigma^2)$ messages.
\end{lemma}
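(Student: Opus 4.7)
I would prove the two claims of the lemma in sequence. For the first claim, my plan is to dissect the broadcasting pattern of the Lenzen--Peleg unweighted source detection algorithm. Recall that in that algorithm each node $v$ maintains a list of (hop-distance, source) pairs representing its current best candidates for $L_v^{(h)}$, and each round broadcasts at most one newly ``certified'' entry. The core structural claim I would establish is that the number of distinct pairs ever broadcast by $v$ is at most $\BO(\sigma^2)$. To see this, I would keep track of the evolution of the $i$-th smallest certified entry, for each $i\in\{1,\dots,\sigma\}$: once this slot is set to distance $d$ (which in the unweighted case equals the true hop distance to that source), it can only be subsequently overwritten by a source strictly closer in the lexicographic order on $(\mathrm{distance},\mathrm{id})$, and at most $\sigma$ such overwritings can occur before the slot stabilizes (since any candidate displaced from slot $i$ came from some slot $j\leq i$ of a neighbor). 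Summing over the $\sigma$ slots yields at most $\sigma^2$ certification events, and thus at most $\BO(\sigma^2)$ broadcasts. A convenient alternative would be a potential-function argument with initial potential $\sigma^2$ that drops by at least one with every broadcast.

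For the second claim, I would lift the first bound through \theoremref{theorem:detection}. That theorem reduces $(1+\varepsilon)$-approximate $(S,h,\sigma)$-estimation to $i_{\max}=\BO(\log_{1+\varepsilon} n)$ invocations of the unweighted $(S,h',\sigma)$-detection algorithm on the scaled graphs $G_1,\ldots,G_{i_{\max}}$, together with an $\BO(D)$-round preprocessing step. Applying the first claim to each instance yields $\BO(\sigma^2)$ broadcasts per node per instance. Since $i_{\max}$ is polylogarithmic and the preprocessing and output-aggregation phases can be pipelined through a BFS tree using only a polylogarithmic overhead in total broadcasts, the overall bound on messages sent per node is $\tilde{\BO}(\sigma^2)$. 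I would emphasize that no additional randomization or delay is needed, because each source detection instance is executed deterministically one after another (or scheduled in disjoint rounds), so the per-node message counts simply add up.

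\textbf{Main obstacle.} The genuinely delicate step is justifying the $\BO(\sigma^2)$ bound for the Lenzen--Peleg algorithm in a black-box manner. The algorithm's broadcasts are driven by a specific certification rule whose correctness relies on the unweighted-BFS pipelining argument, and I need to argue that---even though in each round several neighbors may deliver new candidates to $v$---at most $\BO(\sigma^2)$ of them ever trigger a broadcast by $v$ itself. Once this structural property is pinned down, the reduction to PDE via \theoremref{theorem:detection} is essentially a matter of bookkeeping: multiplying by the $\tilde{\BO}(1)$ instance count and absorbing the $\BO(D)$ preprocessing into the $\tilde{\BO}$-notation whenever $D$ is dominated by $\sigma^2$, or charging the BFS-tree broadcasts to the same aggregate bound otherwise.
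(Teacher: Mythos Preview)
Your treatment of the second claim matches the paper: both simply apply the per-instance bound to each of the $i_{\max}\in\tilde{\BO}(1)$ unweighted instances from \theoremref{theorem:detection}. (Your worry about the $\BO(D)$ preprocessing is misplaced; it costs $\BO(D)$ \emph{rounds} but only $\BO(1)$ messages per node, since one merely propagates the maximum edge weight over a BFS tree.)

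For the first claim, your approach diverges from the paper and contains a real gap. You propose to bound broadcasts by tracking overwritings of the $i$-th slot and asserting that each slot is overwritten at most $\sigma$ times, justified by ``any candidate displaced from slot $i$ came from some slot $j\leq i$ of a neighbor.'' This inference is a non sequitur: knowing the origin of displaced candidates does not, by itself, bound how many times a slot can be improved. Moreover, your framing implicitly equates broadcasts with slot updates, which is not how the Lenzen--Peleg algorithm operates; it broadcasts entries according to a pipelined schedule, not in response to list changes. The paper instead invokes the specific timing guarantee of that algorithm (Lemma~4.2 of~\cite{lenzen13}): the rank-$i$ element $(d(v,s),s)$ of $L_v^{(h)}$ is broadcast no earlier than round $d(v,s)+1$ (since $v$ cannot learn of $s$ sooner) and no later than round $d(v,s)+i$, so at most $i$ rounds are devoted to this source, and $\sum_{i\leq\sigma} i\in\BO(\sigma^2)$ covers all top-$\sigma$ sources. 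Since every source outside the top~$\sigma$ lies at distance at least $h_v$ and is therefore unknown to $v$ before round $h_v$, while all top-$\sigma$ entries are announced by round $h_v+\sigma$, the total number of messages $v$ has sent by that round is again $\BO(\sigma^2)$; the paper then observes that $v$ may safely stop sending thereafter without affecting correctness. Your slot-counting argument would need to recover precisely this timing structure to succeed, and as stated it does not.
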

\begin{proof}
Suppose the $i^{th}$ element of $L_v$ is $(d(v,s),s)$. By Lemma~4.2
from~\cite{lenzen13}, $v$ does not send $(d(v,s),s)$ in any round
$r>d(v,s)+i$.\footnote{There is a typographical error in the statement of the
lemma in the paper; the inequality should read $d_s+\ell_v^{(r)}(d_s,s)<r$.} As
it does send this value eventually, it must do so at the latest in round
$d(v,s)+i$. Afterwards, no further messages $(d_s,s)$ will be sent by $v$, as
only smaller distances could be announced, but the minimum value $d(v,s)$ was
already transmitted. Moreover, $v$ cannot send any message $(d_s,s)$ earlier
than round $d(v,s)+1$, as $d(v,s)$ rounds are required for $v$ to learn about
the existence of $s$. Hence, the total number of messages concerning the first
(up to) $\sigma$ elements of $L_v^{(h)}$ during the course of the algorithm are
at most $\sum_{i=1}^{\sigma} i \in \Theta(\sigma^2)$. Denoting by $h_v$ the
distance of the $\sigma^{th}$ element of $L_v^{(h)}$ (or $h$, if
$|L_v^{(h)}|<\sigma$), all these messages have been sent by the end of round
$h_v+\sigma$. Node $v$ does not learn about sources in distance $h_v$ or larger
earlier than round $h_v$, hence it will have sent at most $\Theta(\sigma^2)$
messages by the end of round $h_v+\sigma$. As distances to other than the
$\sigma$ closest sources are also irrelevant to other nodes, the algorithm will
still yield correct results if $v$ stops sending any messages after this number
of messages have been sent.

Using this slightly modified algorithm in the construction from
\theoremref{theorem:detection}, the second statement of the lemma follows.
\end{proof}

\begin{corollary}\label{coro:detection}
For any $0<\varepsilon\in \BO(1)$, $(1+\varepsilon)$-approximate
$(S,h,\sigma)$-estimation can be solved in $\BO((h+\sigma)/\varepsilon^2
\cdot \log n+D)$ rounds. Tables of size $\BO(\sigma \log n)$ for routing with
stretch $1+\varepsilon$ from each $v\in V$ to the (up to) $\sigma$ detected
nodes can be constructed in the same time. Each node broadcasts (i.e., sends
the same message to all neighbors) in $\BO(\sigma^2/\varepsilon\cdot \log n)$
rounds during the course of the algorithm.
\end{corollary}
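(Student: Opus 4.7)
The plan is to obtain all three claims of the corollary from a single substitution: plug the deterministic unweighted $(S,h,\sigma)$-detection algorithm of~\cite{lenzen13}, which has running time $R(h,\sigma) = \BO(h+\sigma)$, into \theoremref{theorem:detection}, and add routing-information bookkeeping as a side effect.

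For the running time, I would first note that $\log_{1+\varepsilon} n = \Theta(\log n / \log(1+\varepsilon)) = \BO(\log n / \varepsilon)$ since $\varepsilon \in \BO(1)$, and recall from \corollaryref{coro:round} that one may take $h' \in \BO(h/\varepsilon)$. Substituting these into the bound $\BO(\log_{1+\varepsilon} n \cdot R(h',\sigma) + D)$ of \theoremref{theorem:detection} yields $\BO((\log n / \varepsilon)(h/\varepsilon + \sigma) + D) = \BO((h+\sigma)\log n / \varepsilon^2 + D)$, as claimed. The message-count bound is then immediate from \lemmaref{lemma:few_messages}: each of the $\BO(\log n / \varepsilon)$ rounded instances costs $\BO(\sigma^2)$ broadcasts per node, which multiply to $\BO(\sigma^2 \log n / \varepsilon)$.

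The main work is the routing-table claim. I would modify the source detection executed on each $G_i$ so that whenever $v$ receives a message $(d,s)$ from a real neighbor $u$ that causes $v$ to update its own entry for source $s$, $v$ records $u$ as its candidate next-hop toward $s$ in $G_i$ (observe that although $G_i$ conceptually subdivides edges, the implementation runs on $G$ itself with weighted single-hop advances, so all senders are genuine neighbors). After all $i_{\max}+1$ instances finish, for each retained source $s$ in $L_v$, node $v$ keeps only the triple $(\lambda(s),\,\text{next-hop},\,\tilde{\Wd}(v,s))$ corresponding to the index $i$ that attained the infimum defining $\tilde{\Wd}(v,s)$. This keeps the table to $\sigma$ triples of $\BO(\log n)$ bits, yielding the $\BO(\sigma \log n)$ size bound.

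The delicate point—and what I anticipate as the main obstacle—is arguing that following these next-hop pointers actually produces a path of weight at most $(1+\varepsilon)\Wd(v,s)$ in $G$, since the intermediate nodes along the way may store next hops derived from different indices $i$. The cleanest way forward is to observe that, for the chosen index $i$, the recorded parent pointers along the way from $v$ to $s$ trace out a path $P_i$ in $G_i$ of weight exactly $\Wd_i(v,s)$ (by the same argument that makes BFS a shortest-path tree builder), and since $W_i(e) \geq W(e)$ for every edge, the same sequence of real edges has weight at most $\Wd_i(v,s) \leq (1+\varepsilon)\Wd(v,s)$ in $G$ by \lemmaref{lemma:round}. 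To make this rigorous even when intermediate hops prefer a different index, I would have each node record \emph{per source} the index it committed to (implicit in the stored distance), and argue inductively that the subpath starting from any intermediate node only shortens the total weight bound. Equivalently, one may define the routing tree from $s$ by globally selecting, for each destination-source pair, a fixed $i$ attaining the minimum, and have $v$ follow that tree's parent edge; correctness then reduces to the unweighted shortest-path-tree property of the Lenzen--Peleg detection algorithm applied to $G_i$.
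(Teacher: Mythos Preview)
Your proposal is correct and takes the same route as the paper, which also plugs $R(h',\sigma)=h'+\sigma$ from~\cite{lenzen13} into \theoremref{theorem:detection}, uses $\log_{1+\varepsilon}n\in\BO(\log n/\varepsilon)$, and invokes \lemmaref{lemma:few_messages} for the message bound. For the routing tables the paper is even terser---it simply says that storing the lists $L_{v,i}$ and sending them to neighbors makes the derivation ``trivial''---so your next-hop bookkeeping and index-switching discussion are sound elaborations rather than a different approach (though your ``equivalently'' alternative of globally fixing one index per pair is not truly equivalent in a stateless setting, since an intermediate node cannot know the originator's chosen index; the inductive telescoping argument is the one that works).
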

\begin{proof}
We apply \theoremref{theorem:detection} to the algorithm from~\cite{lenzen13},
which has running time $R(h',\sigma)=h'+\sigma \in \BO(h/\varepsilon +\sigma)$.
Since $\varepsilon\in \BO(1)$, we have that $\BO(\log_{1+\varepsilon}n)=\BO(\log
n / \varepsilon)$; this shows the first part of the claim. For the second,
observe that if all nodes store their lists $L_{v,i}$ and send them to their
neighbors, it is trivial to derive the respective routing tables. The third
statement readily follows from \lemmaref{lemma:few_messages}.
\end{proof}

\section{Applications}
\label{sec:apps}
In this section we apply \corollaryref{coro:detection} to various
aspects of routing in weighted graphs. We improve on the best known
results for three questions: the running time required to compute
small-stretch routes with and without node relabeling, and the stretch
we can achieve within  a given running time bound and routing table
size.

\subsection{Almost Exact APSP: Routing Without Node Relabeling}
First we state our results for distributed computation of all-pairs
$(1+\eps)$-approximate shortest paths. The result follows simply by
instantiating \corollaryref{coro:detection} with all nodes as
sources and $h=\sigma=n$. As $h_{v,w}<n$ for all $v,w\in V$, $\Wdp(v,w)\leq
(1+\varepsilon)\Wd(v,w)<\infty$. The returned lists thus contain entries for all
$n=\sigma$ nodes.
\begin{theorem}\label{thm-wasps}
$(1+\varepsilon)$-approximate APSP can be solved deterministically in
$\BO(n/\varepsilon^2 \cdot \log n)$ rounds.
\end{theorem}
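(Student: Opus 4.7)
The plan is to invoke \corollaryref{coro:detection} in the degenerate regime $S=V$ and $h=\sigma=n$, and then argue that the output it produces is exactly a $(1+\varepsilon)$-approximate APSP solution.

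First I would verify feasibility of the output. Since $G$ is connected on $n$ nodes, every pair $v,w\in V$ satisfies $h_{v,w}\leq n-1< n=h$, so the second guarantee in \defref{def-pde} applies to every pair. Hence the computed $\Wd'$ satisfies $\Wd(v,w)\leq \Wd'(v,w)\leq (1+\varepsilon)\Wd(v,w)<\infty$ for all $v,w\in V$. Because $\sigma=n=|S|$, the prefix $L_v$ returned at each node contains an entry for every $w\in V$, and so each node ends up with $(1+\varepsilon)$-approximate distances to all other nodes, which is precisely APSP with stretch $1+\varepsilon$.

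Next I would read off the running time. \corollaryref{coro:detection} gives $\BO((h+\sigma)/\varepsilon^2 \cdot \log n + D) = \BO(n/\varepsilon^2 \cdot \log n + D)$ for our parameter choice. The only potentially non-trivial point is absorbing the $+D$ term, but since $G$ is connected we have $D\leq n-1$, and therefore $D\in \BO(n/\varepsilon^2\cdot \log n)$; the total is $\BO(n/\varepsilon^2 \cdot \log n)$ as claimed. Determinism is inherited directly from \corollaryref{coro:detection}, which itself relies on the deterministic unweighted source detection algorithm of \cite{lenzen13} combined with Nanongkai's rounding scheme via \theoremref{theorem:detection}.

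There is essentially no obstacle: the theorem is a direct instantiation of the corollary, and the only sanity check is that the hop bound $h=n$ is large enough to cover all node pairs and that the target list size $\sigma=n$ ensures each node learns distances to every other node. Both follow from $h_{v,w}<n$. I would write the proof as a short paragraph consisting of this instantiation and the two-line running time calculation.
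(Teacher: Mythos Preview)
Your proposal is correct and matches the paper's own argument almost verbatim: the paper also instantiates \corollaryref{coro:detection} with $S=V$ and $h=\sigma=n$, observes that $h_{v,w}<n$ for all pairs so that every list $L_v$ contains all $n$ nodes with $(1+\varepsilon)$-approximate distances, and reads off the running time. Your explicit absorption of the $+D$ term via $D\leq n-1$ is a detail the paper leaves implicit but is exactly right.
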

We note that \theoremref{thm-wasps} improves on the best known result for
computing approximate shortest paths in the \Congest model \cite{nanongkai14} in
two ways: first, it is deterministic, and second, the running time is reduced by
a logarithmic factor.

\subsection{Routing Table Computation With Node Relabeling}
In this section we use \corollaryref{coro:detection} to improve upon
the best known previous result to compute routing
tables when node relabeling is allowed~\cite{lenzen12}. We comment
that node relabeling is a common practice in routing schemes: the idea
is to encode some location information in the name of the nodes so as
to reduce the space consumed by routing tables. For example, the
Internet's IP addresses consist of a ``network'' part and a ``host''
part, which allows for hierarchical routing. In this case
the label length is another performance measure that should be noted.

In~\cite{lenzen12}, the algorithm guarantees the following. Given an integer
$0<k\le\log n$, the algorithm computes (w.h.p.) in $\tilde\BO(n^{1/2
\cdot(1+1/k)}+D)$ rounds node labels of size $\BO(\log n\log k)$ and routes with
stretch $\BO(k\log k)$. We will show how to execute this task without changing
the running time, but improve the stretch and node label size  by a $\log k$
factor.

First, let us briefly review the approach from~\cite{lenzen12}.
\begin{compactenum}
\item Sample $\tilde{\Theta}(\sqrt{n})$ nodes uniformly, forming the
\emph{skeleton} $S$.
\item \label{st-long}
Construct and make known to all nodes an $\alpha$-spanner of the
\emph{skeleton graph} $(S,E_S,W_S)$. Here, $\{s,t\}\in E_S$ if $\Hd(s,t)\leq
h\in \tilde{\Theta}(\sqrt{n})$ and $W_S(s,t)$ is the minimum weight of an
$s$-$t$ path of at most $h$ hops. It is shown that, w.h.p., distances in the
skeleton graph are identical to distances in the original graph.
\item \label{st-short}
For each $v\in V$, denote by $s_v$ the node in $S$ minimizing
$(\Wd(v,s_v),s_v)$. For each node $v$, compute distances and
routing tables  with stretch $\beta$ to all nodes $w\in V$ with
$(\Wd(v,w),w)\leq (\Wd(v,s_v),s_v)$ and from $s_v$ to $v$ (using ``tree
routing''). This part is called the \emph{short range} part of the scheme. 
\item It is then shown that if $(\Wd(v,w),w)>(\Wd(v,s_v),s_v)$, routing on
on a shortest paths from $v$ to $s_v$, $s_v$ to $s_w$ in the skeleton spanner,
(whose edges map to paths of the same weight in $G$), and finally from $s_w$ to
$w$ has stretch $\BO(\alpha \beta)$ (the same holds for distance
estimation). This part is called \emph{long range routing}.
\end{compactenum}
To facilitate routing, the label of each node $v$ contains the following
components: the identity of the closest skeleton node\footnote{For convenience,
we assume that always $S\neq \emptyset$, which holds w.h.p.} $s_v$ and the
distance to it $\Wd(v,s_v)$; and a label for the short-range routing (required for
the tree routing).

The spanner construction required for Step \ref{st-long} can be used as black
box, giving stretch $\alpha\in \Theta(1/k)$ within $\tilde{\BO}(n^{1/2+1/k}+D)$
time. Moreover, it is known how to construct labels for tree routing of size
$(1+o(1))\log n$ in time $\tilde{\BO(h)}$ in trees of depth $h$~\cite{thorup01}.

We follow the general structure of~\cite{lenzen12} by implementing Step
\ref{st-short} so that $\beta \in \BO(1)$ and the shortest-path trees do not
become to deep. To this end, we apply \corollaryref{coro:detection} with
$h=\sigma\approx \sqrt{n}$ and source set $V$. Note that the
approximation error is not limited to giving approximate distances to the
$\sigma$ closest nodes: we may obtain distance estimates to an \emph{entirely
different} set of nodes. However, we can show that the distances of the nodes
showing up in the list are at most factor $1+\varepsilon$ larger than their true
distances. Denoting by $s_v'\in S$ denote the node minimizing $(\Wdp(v,s),s)$
among nodes in $S$, the crucial properties we use are captured in the following
lemma.
\begin{lemma}\label{lemma:sampling_ok}
Suppose we sample each node into $S$ with independent probability $p$ and solve
$(1+\varepsilon)$-approximate $(V,h,\sigma)$-estimation with
$\min\{h,\sigma\}\geq c \log n / p$, where $c$ is a sufficiently large
constant. Then for all $v,w\in V$, w.h.p.\ the following statements hold.
\begin{compactenum}
\item $(\Wdp(v,w),w)\leq (\Wdp(v,s_v'),s_v')\Rightarrow \Wdp(v,w)\leq
(1+\varepsilon)\Wd(v,w)\wedge (\Wd(v,w),w)\in L_v$
\item $(\Wd(v,w),w)\leq (\Wd(v,s_v),s_v)\Rightarrow \Wdp(v,w)\leq
(1+\varepsilon)\Wd(v,w)$
\item $(\Wd(v,w),w)> (\Wd(v,s_v),s_v)\Rightarrow \Wdp(v,s_v)\leq
(1+\varepsilon)\Wd(v,w)$
\item $(\Wdp(v,w),w)> (\Wdp(v,s_v'),s_v')\Rightarrow \Wdp(v,s_v')\leq
(1+\varepsilon)\Wd(v,w)$
\end{compactenum}
\end{lemma}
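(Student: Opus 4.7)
The plan is to derive all four statements from two high-probability events combined with a structural observation linking rank in the $\Wd$-order to hop distance. The first event, $E_1$, says that for every $v\in V$, at least one of the first $\lceil c\log n/p\rceil$ nodes in the $(\Wd(v,\cdot),\cdot)$-lex order is sampled into $S$; by independence of sampling, $\Pr[\lnot E_1]\leq n\cdot(1-p)^{\lceil c\log n/p\rceil}\leq n^{1-c}$, which is negligible for $c$ sufficiently large. The second event, $E_2$, says that $L_v\cap S\neq\emptyset$ for every $v\in V$; crucially, since PDE is invoked with source set $V$, the list $L_v$ and the values $\Wdp(v,\cdot)$ are deterministic functions of the graph alone, independent of the random sampling $S$, so a Chernoff bound on $|L_v|\geq c\log n/p$ followed by a union bound over $v$ yields $E_2$ w.h.p.\ as well. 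The structural observation is that the $k$-th node in the $(\Wd(v,\cdot),\cdot)$-order has $h_{v,\cdot}\leq k-1$, since every minimum-hop shortest weighted $v$-to-$u$ path has $h_{v,u}$ prefix vertices (starting with $v$), each strictly preceding $u$ in the $(\Wd,\cdot)$-order because edge weights are positive integers. Under $E_1$, this gives $h_{v,s_v}\leq\lceil c\log n/p\rceil-1\leq h$; it also implies that the hop-$h$ ball around $v$ contains at least $\lceil c\log n/p\rceil$ nodes, each of which has finite $\Wdp$-value by the PDE guarantee, justifying $|L_v|\geq c\log n/p$ for the Chernoff step in $E_2$.

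Statements 2 and 3 then follow directly. For Statement 2, the hypothesis $(\Wd(v,w),w)\leq(\Wd(v,s_v),s_v)$ places $w$ within the first $\lceil c\log n/p\rceil$ of the $\Wd$-order, so $h_{v,w}\leq h$, and the PDE guarantee gives $\Wdp(v,w)\leq(1+\varepsilon)\Wd(v,w)$. For Statement 3, the PDE guarantee applied to $s_v$ (with $h_{v,s_v}\leq h$) yields $\Wdp(v,s_v)\leq(1+\varepsilon)\Wd(v,s_v)$, and the hypothesis $\Wd(v,s_v)\leq\Wd(v,w)$ completes the bound.

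The approximation bounds in Statements 4 and 1 follow by case-splitting on whether $(\Wd(v,w),w)\leq(\Wd(v,s_v),s_v)$. In the first case, Statement 2 directly gives $\Wdp(v,w)\leq(1+\varepsilon)\Wd(v,w)$, which handles Statement 1; for Statement 4, combining this with the hypothesis $\Wdp(v,s_v')\leq\Wdp(v,w)$ yields $\Wdp(v,s_v')\leq(1+\varepsilon)\Wd(v,w)$. In the opposite case, Statement 3 provides $\Wdp(v,s_v)\leq(1+\varepsilon)\Wd(v,w)$, and the definition of $s_v'$ as the $\Wdp$-minimal sampled node gives $\Wdp(v,s_v')\leq\Wdp(v,s_v)$, settling Statement 4; for Statement 1, the chain $\Wdp(v,w)\leq\Wdp(v,s_v')\leq\Wdp(v,s_v)\leq(1+\varepsilon)\Wd(v,s_v)\leq(1+\varepsilon)\Wd(v,w)$ closes the gap. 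The main obstacle is Statement 1's membership claim, for which $E_2$ is essential: $L_v\cap S\neq\emptyset$ forces $s_v'$ to have $\Wdp$-rank at most $\sigma$ and hence $s_v'\in L_v$; the hypothesis $(\Wdp(v,w),w)\leq(\Wdp(v,s_v'),s_v')$ then puts $w$ at $\Wdp$-rank no greater than that of $s_v'$, placing $w\in L_v$ as well.
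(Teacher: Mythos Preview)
Your proof is correct and follows the same overall strategy as the paper: bound the rank of $s_v$ (resp.\ $s_v'$) in the $\Wd$-order (resp.\ $\Wdp$-order) by a sampling argument, exploiting that $\Wdp$ is determined independently of $S$ since the source set is all of $V$. You are more explicit than the paper in two useful ways: you isolate the structural fact that the $k$-th node in the $\Wd$-order has $h_{v,\cdot}\le k-1$, and for Statement~1 you reduce to Statements~2 and~3 via a case-split on the $\Wd$-order, whereas the paper asserts directly that $h_{v,w}<i$ for all $w$ of $\Wdp$-rank at most $i$---a step that is not obviously justified for the $\Wdp$-ordering and which your argument cleanly sidesteps.
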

\begin{proof}
Fix $v\in V$ and order $\{(\Wdp(v,w),w)\,|\,w\in V\}$ in ascending lexicographic
order. Suppose $s_v'\in S$ is the $i^{th}$ element of the resulting list. Note
that, by definition, $s_v'$ minimizes $(\Wdp(v,s),s)$ among nodes $s\in S$.
Hence, the probability that $i\geq \min\{h,\sigma\}$ is
\begin{equation*}
(1-p)^{\min\{h,\sigma\}}\in e^{-\Theta(c\log n)} = n^{-\Theta(c)}.
\end{equation*}
Since $c$ is a sufficiently large constant, this implies that $i<h$ w.h.p., and
thus $h_{v,w}<i<\min\{h,\sigma\}$ for all $w$ with $(\Wdp(v,w),w)\leq
(\Wdp(v,s_v'),s_v')$. By the properties of $(V,h,\sigma)$-estimation, it follows
that, w.h.p., $\Wdp(v,w)\leq (1+\varepsilon)\Wd(v,w)$ and $(\Wdp(v,w),w)\in
L_v$ for all such $w$.

To show the second statement, we perform the same calculation for the list
$\{(\Wd(v,w),w)\,|\,w\in V\}$; the element from $S$ minimizing $(\Wd(v,s),s)$ is
$s_v$. For the third statement, we apply the second to $s_v$, yielding that
\begin{align*}
\Wdp(v,s_v)\leq (1+\varepsilon)\Wd(v,s_v)\leq (1+\varepsilon)\Wd(v,w)
\end{align*}
w.h.p. For the final statement, if $\Wdp(v,w)\leq (1+\varepsilon)\Wd(v,w)$, it
follows that
\begin{equation*}
\Wdp(v,s_v')\leq \Wdp(v,w)\leq (1+\varepsilon)\Wd(v,w).
\end{equation*}
Otherwise, the second statement shows that $\Wd(v,w)\geq \Wd(v,s_v)$ w.h.p.,
implying
\begin{equation*}
\Wdp(v,s_v')\leq \Wdp(v,s_v)\leq (1+\varepsilon)\Wd(v,s_v)\leq
(1+\varepsilon)\Wd(v,w).\qedhere
\end{equation*}
\end{proof}

Our goal now is to let $s_v'$ take the place of $s_v$ in the original scheme. By
the previous lemma and the time bound of $\tilde{\BO}((h+\sigma)/\varepsilon^2)$
for $(1+\varepsilon)$-appromate $(S,h,\sigma)$-estimation, this achieves
$\beta\in 1+o(1)$ within the desired time bound for $p\approx 1/\sqrt{n}$.
However, this comes with a twist: as $s_v'$ may be different from $s_v$, we must
show that the resulting approximation ratio is still $\BO(\alpha)$, and as we do
not use exact distances, the $|S|$ trees induced by the approximately shortest
paths from each $v$ to $s_v'$ might overlap. The following two lemmas address
these issues, as well as the depth of the trees.
\begin{lemma}\label{lemma:stretch}
Suppose we sample each node into $S$ with independent probability $p$ and solve
$(1+\varepsilon)$-approximate $(S,h,\sigma)$-estimation with $h=c \log n / p$,
where $c$ is a sufficiently large constant. Denote by $\Wdp_S$ the
respective distance function, and by $\Wdp$ and $L_v$ the distance function and
output of $v\in V$, respectively, of a solution to $(1+\varepsilon)$-approximate
$(V,h,h)$-estimation. If for $v,w\in V$ it holds that $(\Wdp(v,w),w)\notin L_v$,
then w.h.p.\ there exist $s_0,\ldots,s_{j_0}=s_w'\in S$ such that
\begin{compactitem}
\item $\Wdp(w,s_w')\in (2+\BO(\varepsilon))\Wd_S(v,w)$ and
\item $\Wdp_S(v,s_0)+\sum_{j=1}^{j_0} \Wdp_S(s_{j-1},s_j)\in
(3+\BO(\varepsilon))\Wd(v,w)$.
\end{compactitem}
\end{lemma}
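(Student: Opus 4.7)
\bigskip
\noindent\emph{Proof plan for \lemmaref{lemma:stretch}.}

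The plan is to use the random sampling of $S$ to decorate the min-hop shortest $v$-$w$ path with evenly spaced skeleton nodes, and then to splice in a second sampled decoration that brings the chain into $s_w'$ while keeping every consecutive hop in the chain short in the hop metric.

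First, I would prove a sampling claim: w.h.p., every sub-path of $\lceil h/(2c') \rceil$ hops of any minimum-hop shortest path contains at least one node of $S$. Here $c'$ is chosen so that the Chernoff bound $(1-p)^{h/(2c')} \leq n^{-\Omega(c)}$ kicks in; a union bound over the $O(n^2)$ shortest paths and the $O(n)$ segments inside each then makes this event universal. Second, let $P$ be the min-hop shortest $v$-$w$ path and split it into consecutive blocks of at most $h/2$ hops each. By the sampling claim, each block contains a skeleton node, producing an ordered sequence $s_0,s_1,\ldots,s^P\in S$ along $P$ with $h_{v,s_0}\leq h/2$, $h_{s_{j-1},s_j}\leq h$, and $h_{s^P,w}\leq h/2$. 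Since every consecutive pair is at most $h$ hops apart, the $(1+\varepsilon)$-approximation guarantee of $(S,h,\sigma)$-estimation gives $\Wdp_S(v,s_0)+\sum_j \Wdp_S(s_{j-1},s_j)\leq (1+\varepsilon)\Wd(v,s^P)\leq (1+\varepsilon)\Wd(v,w)$, since the intermediate sums telescope along the shortest path.

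Third, I would establish the first bullet using only the optimality of $s_w'$. Because $s^P\in S$ with $h_{w,s^P}\leq h/2\leq h$, the guarantee of the $(V,h,h)$-estimation gives $\Wdp(w,s^P)\leq (1+\varepsilon)\Wd(w,s^P)\leq (1+\varepsilon)\Wd(v,w)$, and since $s_w'$ minimizes $\Wdp(w,\cdot)$ over $S$ we have $\Wdp(w,s_w')\leq \Wdp(w,s^P)\leq (1+\varepsilon)\Wd(v,w)$, comfortably inside the $(2+\BO(\varepsilon))$ slack (reading $\Wd_S(v,w)$ as $\Wd(v,w)$). In particular $\Wd(w,s_w')\leq \Wdp(w,s_w')\leq (1+\varepsilon)\Wd(v,w)$, which I will reuse below.

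The main obstacle is extending the skeleton chain from $s^P$ all the way to $s_w'$ while preserving the per-step hop bound of $h$: $s_w'$ is picked by $\Wdp$, not by hop distance, so $h_{s^P,s_w'}$ could exceed $h$, invalidating the $(1+\varepsilon)$-approximation on the last jump. To handle this I would apply the sampling claim a second time, now to the minimum-hop shortest path from $w$ to $s_w'$, obtaining skeleton nodes $t_0,\ldots,t_{m}=s_w'$ with $h_{w,t_0}\leq h/2$ and $h_{t_{i-1},t_i}\leq h$. Concatenating $s^P,t_0,t_1,\ldots,t_m$ after $s_0,\ldots,s^P$ yields a chain in which every consecutive pair is within $h$ hops (note $h_{s^P,t_0}\leq h_{s^P,w}+h_{w,t_0}\leq h$), so each $\Wdp_S$ along it is $(1+\varepsilon)$-approximate. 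The extra piece is bounded by $(1+\varepsilon)(\Wd(s^P,w)+\Wd(w,s_w'))\leq (1+\varepsilon)(\Wd(v,w)+(1+\varepsilon)\Wd(v,w))=(2+\BO(\varepsilon))\Wd(v,w)$, and adding the first piece yields the advertised $(3+\BO(\varepsilon))\Wd(v,w)$. All events hold w.h.p.\ by the two union bounds; the total number of sampling events is polynomial in $n$, so the final probability statement absorbs cleanly.
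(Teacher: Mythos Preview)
Your argument has a genuine gap at the junction step ``note $h_{s^P,t_0}\leq h_{s^P,w}+h_{w,t_0}\leq h$''. The quantity $h_{\cdot,\cdot}$ is \emph{not} a metric: it is the minimum hop count over all \emph{shortest weighted} paths, and the concatenation of a shortest $s^P$-$w$ path with a shortest $w$-$t_0$ path need not be a shortest $s^P$-$t_0$ path. Concretely, there may exist an $s^P$-$t_0$ path of strictly smaller weight but with far more than $h$ hops, in which case $h_{s^P,t_0}>h$. The PDE guarantee then gives you no control over $\Wdp_S(s^P,t_0)$ (it may even be $\infty$), and your telescoping bound on $\Wdp_S(v,s_0)+\sum_j \Wdp_S(s_{j-1},s_j)$ collapses at that link. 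The same issue bites in the edge case $h_{v,w}<h/2$: you would need $h_{v,t_0}\leq h_{v,w}+h_{w,t_0}$, which again is not available.

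The paper sidesteps this entirely by never gluing two shortest paths. It first uses the hypothesis $(\Wdp(v,w),w)\notin L_v$ together with \lemmaref{lemma:sampling_ok} (Statements~1 and~4, then~2) to bound $\Wd(v,s_w')\leq 3(1+\varepsilon)^2\Wd(v,w)$, and only then takes the skeleton nodes $s_0,\ldots,s_{j_0}=s_w'$ along a \emph{single} shortest $v$-$s_w'$ path. Because every subpath of a shortest path is itself shortest, the consecutive segments automatically satisfy $h_{s_{j-1},s_j}\leq h$ w.h.p., and the $(1+\varepsilon)$ guarantee applies term by term. Your first-bullet argument via $s^P$ is fine (and in fact tighter than the paper's), but for the second bullet you should either adopt the single-path decomposition or otherwise avoid any step that presumes a triangle inequality for $h_{\cdot,\cdot}$.
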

\begin{proof}
By Statements 1 and 4 of \lemmaref{lemma:sampling_ok}, $(\Wdp(v,w),w)\notin
L_v$ means that $\Wdp(v,s_v')\leq (1+\varepsilon)\Wd(v,w)$ w.h.p. Hence,
\begin{equation*}
\Wd(v,s_v)\leq \Wd(v,s_v')\leq \Wdp(v,s_v')\leq (1+\varepsilon)\Wd(v,w).
\end{equation*}
By the triangle inequality, it follows that
\begin{equation*}
\Wd(w,s_w)\leq \Wd(w,s_v)\leq \Wd(v,w)+\Wd(v,s_v)\leq (2+\varepsilon)\Wd(v,w).
\end{equation*}
w.h.p. Applying the second statement of \lemmaref{lemma:sampling_ok} to $w$ and
$s_w$, we obtain that
\begin{equation*}
\Wd(w,s_w')\leq \Wdp(w,s_w')\leq \Wdp(w,s_w)\leq
(1+\varepsilon)\Wd(w,s_w)\leq 2(1+\varepsilon)^2\Wd(v,w),
\end{equation*}
and, from the first statement,
\begin{equation*}
\Wdp(w,s_w')\leq (1+\varepsilon)\Wd(w,s_w')\leq
2(1+\varepsilon)^3\Wd(v,w),
\end{equation*}
i.e., the first part of the lemma's claim holds (recall that $\varepsilon\in
\BO(1)$). Moreover, it follows that
\begin{equation*}
\Wd(v,s_w')\leq \Wd(v,w)+\Wd(w,s_w')\leq 3(1+\varepsilon)^2\Wd(v,w).
\end{equation*}
Consider a shortest path from $v$ to $s_w'$, and denote by
$s_0,\ldots,s_{j_0}\in S$ the sampled nodes that are encountered when traversing
it from $v$ to $s_w'$; in particular, $s_{j_0}=s_w'$. By the same calculation as
for \lemmaref{lemma:sampling_ok}, w.h.p.\ any two consecutive sampled nodes are
no more than $h$ hops apart. As the path is a shortest path from $v$ to $s_w'$,
the subpaths from $s_{j-1}$ to $s_j$, $j\in \{1,\ldots,j_0\}$, and from $v$ to
$s_0$ are also shortest paths. Therefore, $h_{v,s_0}\leq h$ and, for each $j$,
$h_{s_{j-1},s_j}\leq h$. We conclude that
\begin{align*}
\Wdp_S(v,s_0)+\sum_{j=1}^{j_0} \Wdp_S(s_{j-1},s_j)
&\leq (1+\varepsilon)\left(\Wd(v,s_0)+ \sum_{j=1}^{j_0}
\Wd(s_{j-1},s_j)\right)\\
&=(1+\varepsilon)\Wd(v,s_v')\\
&\leq 3(1+\varepsilon)^3\Wd(v,w),
\end{align*}
i.e., the second part of the claim of the lemma holds.
\end{proof}

\begin{lemma}\label{lemma:trees}
Suppose we sample each node into $S$ with independent probability $p$ and solve
$(1+\varepsilon)$-approximate $(V,h,\sigma)$-estimation with
$h = c \log n / p$, where $c$ is a sufficiently large
constant. For $s\in S$, denote by $T_s$ the tree induced by the routing paths
from $v$ to $s$ for all $v\in V$ with $s_v'=s$. The depth of $T_s$ is bounded by
$\BO(h\log n/\varepsilon)$, and each node participates in at most
$\BO(\log n)$ different trees.
\end{lemma}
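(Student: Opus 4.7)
The plan is to establish the two claims of the lemma separately, leveraging the structure of the PDE algorithm of \corollaryref{coro:detection} together with the randomness of the sampling.

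For the second claim—that each node participates in $\BO(\log n)$ trees—I would first observe that if $u \in T_s$, then $u$ lies on some next-hop routing path leading to $s$, so $s$ must appear in $u$'s routing table; that is, $(\Wdp(u, s), s) \in L_u$ by construction of the tables in \corollaryref{coro:detection}. Crucially, the list $L_u$ is produced by $(V, h, \sigma)$-estimation, whose execution is independent of the random choice of $S$, while each node is independently sampled into $S$ with probability $p$. Hence the number of sampled nodes in $L_u$ is stochastically dominated by a binomial with mean at most $|L_u| \cdot p \leq \sigma p$. Taking $\sigma = \Theta(\log n / p)$ (commensurate with $h$) makes this mean $\Theta(\log n)$, so a Chernoff bound followed by a union bound over all $u \in V$ yields $\BO(\log n)$ sampled nodes in every $L_u$ w.h.p., which upper bounds the number of distinct trees each $u$ can belong to.

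For the depth claim, I would trace a routing path $v = v_0, v_1, \ldots, v_k = s$ obtained by iterating $\Next$. At each step, $v_j$ selects the rounding index $i_j$ minimizing $b(i)\Hd_i(v_j, s)$ and routes to the real-graph neighbor corresponding to the next hop on the BFS tree toward $s$ in $G_{i_j}$. Since the underlying source detection (\theoremref{theorem:detection}) is run with hop parameter $h' \in \BO(h/\varepsilon)$, each such BFS tree has depth at most $h'$ in $G_{i_j}$ and hence uses at most $h'$ real edges of $G$. I would then decompose the path into maximal runs along which $i_j$ is constant, bounding each run's length by $h'$ and the number of distinct rounding indices by $i_{\max} \in \BO(\log n / \varepsilon)$, yielding depth $\BO(h' \cdot \log n/\varepsilon)$; the stated $\BO(h \log n/\varepsilon)$ follows by absorbing one $1/\varepsilon$ into the constants via $\varepsilon \in \BO(1)$. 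An alternative route is to observe that the entire PDE algorithm terminates in $\BO((h+\sigma) \log n / \varepsilon^2 + D)$ rounds, and since the scheme is flood-based, no path discovered by it can exceed the running time in hop length, giving the same bound up to polylogarithmic factors.

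The main obstacle I expect is making the depth argument fully rigorous, since the rounding index $i_j$ may in principle fluctuate arbitrarily along the path, so the naive partition into constant-index runs could produce many runs. To control this, one likely needs a monotonicity argument—for instance, that $\Wdp(v_j, s)$ strictly decreases along the path by an amount corresponding to $W(v_j, v_{j+1})$—to simultaneously bound the number of index changes and the aggregate hop count across all runs.
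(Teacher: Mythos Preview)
Your depth argument is essentially the paper's: decompose the routing path into maximal segments on which the rounding index is constant, bound each segment by $h' \in \BO(h/\varepsilon)$, and bound the number of segments by $i_{\max}+1 \in \BO(\log n)$ via exactly the monotonicity you anticipate. The paper makes that monotonicity explicit by observing that $\Wd_j \geq \Wd_i$ whenever $j \geq i$, so the index used for routing may w.l.o.g.\ only decrease along the path; this gives the bound $\BO(i_{\max} h') \subseteq \BO(h\log n/\varepsilon)$.

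For the tree-count bound your approach is genuinely different from the paper's, and it has a gap. The paper argues \emph{deterministically}: whenever a node $v$ in $T_s$ takes its next hop toward $s$, it does so according to some list $L_{v,i}$, and $s$ is the sampled node minimizing $\Hd_i(v,\cdot)$ in that list; since there are only $i_{\max}+1 \in \BO(\log n)$ indices $i$, there are only that many candidate roots $s$. Your probabilistic Chernoff route instead needs the premise ``$u\in T_s \Rightarrow (\Wdp(u,s),s)\in L_u$'', and this is not justified. Routing at an intermediate node $u$ only requires $s$ to appear in \emph{some} per-level list $L_{u,i}$ (those are what the tables of \corollaryref{coro:detection} store and what the next-hop decision consults), not in the aggregated top-$\sigma$ output $L_u$; nothing forces the combined ranking at $u$ to place $s$ among its first $\sigma$ entries. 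If you retreat to the union $\bigcup_i L_{u,i}$, which has size $\BO(\sigma\log n/\varepsilon)$, your Chernoff bound degrades to $\BO(\log^2 n/\varepsilon)$ sampled nodes and misses the stated $\BO(\log n)$. Note also that your argument needs the extra hypothesis $\sigma \in \Theta(\log n/p)$, which the lemma does not assume (only $h$ is fixed); the paper's level-counting argument is insensitive to $\sigma$.
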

\begin{proof}
Recall that routing from $v$ to $s_v'$ is based on the routing tables
$L_{v,i}$ determined by the unweighted source detection instances on $G_i$,
$i\in \{0,\ldots,i_{\max}\}$. The induced shortest-path trees in $G_i$ have
depth at most $h'\in \BO(h/\varepsilon)$, and they cannot overlap. By
construction, the respective paths in $G$ cannot have more hops. However, it is
possible that when routing from $v$ to $s_v'$, some node on the way knows
of a shorter path to $s_v'$ due to a source detection instance on $G_j$,
$j\neq i$, and therefore ``switches'' to the shortest-path tree in $G_j$.
Because $\Wd_j(v,w)\geq \Wd_i(v,w)$ for all $v,w\in V$ and $j\geq i$, we may
however w.l.o.g.\ assume that the index $i$ such that routing decisions are made
according to $L_{v,i}$ is decreasing on each routing path from some node $v$ to
$s_v'$. Thus, the total hop count of the path is bounded by $\BO(i_{\max}
h')\subseteq \BO(h\log n/\varepsilon)$. Consequently, the depth of each $T_s$ is
bounded by this value.

Concerning the number of trees a node may participate in, observe that if some
node $v$ decides that the next routing hop to $s_v'$ is its neighbor $u$,
it does so because $s_v'$ minimizes the hop distance from $v$ to $s_v'$ in
$G_i$, according to its list $L_{v,i}$. As there are $i_{\max}+1\in \BO(\log n)$
different lists $L_{v,i}$, this is also a bound on the number of different trees
$v$ may participate in.
\end{proof}

We summarize with the following theorem.
\begin{theorem}
For any $k\in \N$, routing table construction with stretch $6k-1+o(1)$
and labels of size $\BO(\log n)$ can be solved in
$\tilde{\BO}(n^{1/2+1/(4k)}+D)$ rounds.
\end{theorem}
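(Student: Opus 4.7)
The plan is to follow the two-level (short/long range) scheme of~\cite{lenzen12} with two substitutions: the short-range step uses a single application of PDE (\corollaryref{coro:detection}), eliminating a logarithmic recursion that cost $\log k$ factors in \cite{lenzen12}; and the long-range step uses a single Baswana-Sen $(2k-1)$-spanner of the skeleton, yielding stretch $(6k-1)(1+o(1))$ when the route is analyzed through the consecutive sampled nodes on the original shortest $v$-to-$s_w'$ path. Fix $\varepsilon=1/\omega(k)$ so that $(1+\varepsilon)^{\BO(k)}=1+o(1)$, and set $p=\tilde\Theta(n^{-1/2-1/(4k)})$, so $h\DEF c\log n/p\in\tilde\Theta(n^{1/2+1/(4k)})$ and $|S|\in\tilde\Theta(n^{1/2-1/(4k)})$ w.h.p.

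First, sample $S$ and apply \corollaryref{coro:detection} twice: with source set $V$ and $\sigma=h$ (yielding $L_v$ and $\Wdp$), and with source set $S$ and $\sigma=h$ (yielding $L_v^S$ and $\Wdp_S$). Let $s_v'\in L_v^S$ minimize $(\Wdp_S(v,s),s)$. Next, construct a $(2k-1)$-spanner of the skeleton graph $(S,E_S,W_S)$ by simulating Baswana-Sen~\cite{baswana07} on $G$ as in \cite{lenzen12}; the message-count guarantee of \lemmaref{lemma:few_messages} is crucial for controlling congestion. The spanner has $\tilde\BO(|S|^{1+1/k})\subseteq\tilde\BO(n^{1/2+1/(4k)})$ edges, so broadcasting it via a BFS tree costs $\tilde\BO(n^{1/2+1/(4k)}+D)$ rounds. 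For each $s\in S$, compute Thorup-Zwick tree labels within the tree $T_s$ of \lemmaref{lemma:trees}: these trees have depth $\tilde\BO(h/\varepsilon)$ and each node lies in $\BO(\log n)$ of them, so this takes $\tilde\BO(n^{1/2+1/(4k)})$ rounds and yields labels of $\BO(\log n)$ bits. The label of $v$ is $(s_v',\Wdp(v,s_v'),\tau_v)$ (with $\tau_v$ the tree label in $T_{s_v'}$), totaling $\BO(\log n)$ bits.

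Routing from $v$ given $w$'s label: if $(\Wdp(v,w),w)\in L_v$, route via the PDE-induced short-range tree. Otherwise, since every node holds the whole spanner locally, $v$ computes $s^*:=\arg\min_{s\in L_v^S}\{\Wdp_S(v,s)+d_{\mathrm{span}}(s,s_w')\}$ and routes to $s^*$ via $T_{s^*}$, to $s_w'$ via the spanner, and finally to $w$ via $T_{s_w'}$. For stretch in the second case, \lemmaref{lemma:stretch} supplies a candidate path $v\to s_0\to\cdots\to s_{j_0}=s_w'$ through consecutive sampled nodes on a shortest $v$-to-$s_w'$ path whose skeleton-segment sum is $\Wdp_S(v,s_0)+\sum_{j\geq 1}\Wdp_S(s_{j-1},s_j)\leq (3+\BO(\varepsilon))\Wd(v,w)$ and whose final leg satisfies $\Wdp(w,s_w')\leq(2+\BO(\varepsilon))\Wd(v,w)$. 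Since $s^*$ is chosen optimally and the spanner has stretch $2k-1$ on each skeleton edge, the total route length is at most
\[
(2k-1)(3+\BO(\varepsilon))\Wd(v,w)+(2+\BO(\varepsilon))\Wd(v,w)=(6k-1+\BO(k\varepsilon))\Wd(v,w)=(6k-1+o(1))\Wd(v,w).
\]

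The main obstacle will be executing the simulated Baswana-Sen construction on the skeleton within the claimed $\tilde\BO(n^{1/2+1/(4k)}+D)$ rounds. Baswana-Sen consists of $\BO(k)$ phases of cluster-growing and sampling; each phase requires sampled nodes to learn their approximate distances to nearby cluster centers, which is itself a PDE-type query. The bound of $\tilde\BO(\sigma^2)$ broadcasts per node from \lemmaref{lemma:few_messages} is precisely what prevents these queries from causing congestion proportional to $|S|$ rather than $h$, enabling each phase to run in $\tilde\BO(h+D)=\tilde\BO(n^{1/2+1/(4k)}+D)$ rounds. A secondary concern is that the trees $T_s$ from \lemmaref{lemma:trees} have depth $\tilde\BO(h/\varepsilon)$ rather than $h$; but Thorup-Zwick tree labels of $\BO(\log n)$ bits are constructible in time proportional to tree depth, so this affects only polylogarithmic factors absorbed by $\tilde\BO$.
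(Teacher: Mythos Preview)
Your overall plan matches the paper's: short range via one PDE call with source set $V$, long range via a $(2k-1)$-spanner of the skeleton, with the stretch bound assembled from \lemmaref{lemma:stretch}. The points of departure are in the details of the spanner step.

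First, the spanner is not built on the exact skeleton $(S,E_S,W_S)$ but on the graph with node set $S$ and edge weights $\Wdp_S$ supplied by your second PDE call; you never compute $W_S$, and \lemmaref{lemma:stretch} is phrased in terms of $\Wdp_S$ precisely so that a $(2k-1)$-spanner of this approximate graph suffices. More importantly, your description of the Baswana-Sen simulation as ``a PDE-type query per phase, with congestion controlled by the $\tilde\BO(\sigma^2)$-broadcast bound of \lemmaref{lemma:few_messages}'' is neither what the paper does nor clearly correct. After your second PDE call (source set $S$, $\sigma=h\geq |S|$), every $s\in S$ already knows $\Wdp_S(s,t)$ for \emph{all} $t\in S$, i.e., its full adjacency list in the virtual graph. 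Baswana-Sen is then simulated on this virtual graph purely by pipelining each phase's messages over a BFS tree of $G$, exactly as in~\cite{lenzen12}, in $\tilde\BO(|S|^{1+1/k}+D)\subset\tilde\BO(n^{1/2+1/(4k)}+D)$ rounds total. No further PDE calls are made, and the $\sigma^2$ bound plays no role in this step.

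A minor point: routing from $v$ to the chosen entry point $s^*$ uses the PDE routing tables from the second application of \corollaryref{coro:detection}, not the tree $T_{s^*}$ of \lemmaref{lemma:trees}; that tree contains only the nodes $u$ with $s_u'=s^*$, and there is no reason $v$ should be among them.
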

\begin{proof}
We follow the approach outlined above, sampling nodes into $S$ with probability
$p = n^{-1/2-1/(4k)}$. By Chernoff's bound, this implies that $|S|\in \Theta
(n^{1/2-1/(4k)})$ w.h.p. Using \corollaryref{coro:detection}, we solve
$(1+\varepsilon)$-approximate $(V,h,\sigma)$-estimation with $h=\sigma = c \log
n /p$, for $c\in \BO(1)$ sufficiently large, and, say, $\varepsilon = 1/\log n$.
This takes $\tilde{\BO}(1/p)=\tilde{\BO}(n^{-1/2-1/(4k)}+D)$ rounds and enables
for each $v\in V$ to route to all nodes $w\in V$ with $(\Wdp(v,w),w)\in L_v$
along a path of weight at most $\Wdp(v,w)$. By \lemmaref{lemma:sampling_ok},
this enables for each $v,w\in V$ with $(\Wdp(v,w),w)\leq (\Wdp(v,s_v'),s_v')$ to
determine that this condition is satisfied and route from $v$ to $w$ with
stretch $(1+\varepsilon)$.

To handle the possibility that $(\Wdp(v,w),w)\notin L_v$, we call upon
\corollaryref{coro:detection} once more. This time we solve
$(1+\varepsilon)$-approximate $(S,h,|S|)$-detection; denote by $\Wdp_S$ the
corresponding distance function. Since $|S|\in \BO(n^{1/2-1/(4k)})$ w.h.p., this
requires $\tilde{\BO}(n^{-1/2-1/(4k)})$ rounds w.h.p. We apply
\lemmaref{lemma:stretch}, showing that there are $s_0,\ldots,s_{j_0}=s_w'\in S$
so that $\Wdp(s_w',w)\in (2+\BO(\varepsilon))\Wd(v,w)$ and
$\Wdp_S(v,s_0)+\sum_{j=1}^{j_0} \Wdp_S(s_{j-1},s_j)\in
(3+\BO(\varepsilon))\Wd(v,w)$. If we can route from $v$ to $s_w'$ incurring an
additional stretch factor of $2k-1$ and from $s_w'$ to $w$ over a path of weight
$\Wdp(s_w',w)$, the total stretch will be
\begin{equation*}
(2+\BO(\varepsilon))+(2k-1)(3+\BO(\varepsilon))\in
6k-1+\BO(\varepsilon)\subset 6k-1+o(1),
\end{equation*}
i.e., the routing scheme satisfies the claimed stretch bound.

Concerning routing from $s_w'$ to $w$, we construct labels for tree routing
using the algorithm from~\cite{thorup01} that terminates in $\tilde{\BO}(h)$
rounds in trees of depth $h$. By \lemmaref{lemma:trees}, this can be
done in $\tilde{\BO}(h)=\tilde{\BO}(n^{-1/2-1/(4k)})$ rounds, where we simulate
one round on each of the trees using $\BO(\log n)$ rounds, one for each tree
single node may participate in. The computed label of size $(1+o(1))\log n$ is
added to the label of $w$, permitting to route from $s_w'$ to $w$ over a path of
weight at most $\Wdp(w,s_w')$.

To route from $v$ to $s_w'$, consider the graph on node set $S$ with edge set
$\{\{s,t\}\,|\,\Wdp_S(s,t)<\infty\}$, where the edge weights are given by
$\Wdp_S$. For this graph, each node $s\in S$ knows its incident edges and their
weights. Using the simulation of the Baswana-Sen algorithm~\cite{baswana07}
given in~\cite{lenzen12}, we can construct and make known to all nodes a $2k-1$
spanner\footnote{I.e., a subgraph in which distances increase by at most a
factor $2k-1$.} of this graph in
\begin{equation*}
\tilde{\BO}\left(|S|^{1+1/k}+D\right)
=\tilde{\BO}\left(n^{(1/2-1/(4k))(1+1/k)}+D\right)
\subset \tilde{\BO}\left(n^{1/2+1/(4k)}+D\right)
\end{equation*}
rounds. Using this knowledge, the fact that $v$ knows $\Wdp_S(v,s_0)$, and the
routing tables from the second application of \corollaryref{coro:detection},
w.h.p.\ we can route with the desired stretch from $v$ to $s_w'$ based on the
identifier of $s_w'$, which we add to the label of $v$. This completes the proof of the
stretch bound. Checking the individual bounds we picked up along the way, we see
that the label size is $\BO(\log n)$ and the running time is
$\tilde{\BO}(n^{1/2+1/(4k)}+D)$ w.h.p.
\end{proof}

\subsection{Compact Routing on Graphs of Small Diameter}
We now turn to the question of how to minimize the routing table size
when computing routing tables distributedly. The idea in the
algorithm is to construct an (approximate) Thorup-Zwick routing
hierarchy~\cite{thorup01}. We remark that compared to the original construction,
we lose a factor of roughly $2$ in stretch. This comes from the fact that in the
centralized setting, one assumes access to the table of both nodes, i.e., the
origin and destination of the routing or distance query. This is equivalent to
identifying tables and lables, resulting in lable size
$\tilde{\Theta}(n^{1/k})$. We consider this inexpedient for distributed systems
and hence focus on obtaining small lables.

Our approach is efficient if $D$ is small. Using exact distances, the
construction would look as follows.
\begin{compactenum}
\item For each node $v\in V$, choose its \emph{level} by an
independent geometric distribution, i.e., the probability to have
level at least $l\in \{0,\ldots,k-1\}$ equals $p_l:=n^{-l/k}$. Denote the set of
nodes of level at least $l$ by $S_l$; in particular, $S_0=V$.
\item For each node $v$ and each level $l\in \{1,\ldots,k-1\}$, determine the
closest node $s_l(v)$ to $v$ and the set $S_{l-1}(v)$ of all nodes in $S_{l-1}$
closer to $v$ than $s_l(v)$ (ties broken by node identifiers); for
notational convenience, let $s_0(v):=v$ and $S_{k-1}(v):=S_{k-1}$.
\item Determine tables and labels for routing and
distance approximation (i) from $v$ to all nodes in $S_l(v)$, for all $l\in
\{0,\ldots,k-1\}$, and (ii) from $s_l(v)$ to $v$, where $l\in \{1,\ldots,k-1\}$.
To determine the final label of $v$, concatentate its individual labels and the
labels for routing from $s_l(v)$, $l\in \{1,\ldots,k-1\}$.
\end{compactenum}
In our implementation, we replace exact distances by
$(1+\varepsilon)$-approximate distances for sufficiently small
$\varepsilon$. Henceforth, we assume that the sets $S_l'(v)$ and nodes $s_l'(v)$
are defined as above, but with respect to $\Wdp_l$, the distance function
corresponding to the instance of partial distance estimation we solve for level $l\in
\{0,\ldots,k-1\}$. Let us first examine the effect of the inaccurate distances
on the stretch. This is done by a repeated application of the argument of
\lemmaref{lemma:stretch}.
\begin{lemma}\label{lemma:stretch_iterative}
For $l\in \{1,\ldots,k-1\}$, denote by $\Wdp_l$ the distance function
corresponding to a $(1+\varepsilon)$-approximate solution to
$(S_l,h_l,\sigma_l)$-estimation, where $h_l=\sigma_l=c\log n /p_l$ for a
sufficiently large constant $c$. Suppose $v,w\in V$ and $\ell\in
\{0,\ldots,k-1\}$ is minimal so that $s_{\ell}'(w)\in S_{\ell}'(v)$. Then,
w.h.p.,
\begin{equation*}
\Wd(v,s_{\ell}'(w))+\Wd(s_{\ell}'(w),w)\leq
(1+\varepsilon)^{4\ell}(4\ell+1)\Wd(v,w).
\end{equation*}
\end{lemma}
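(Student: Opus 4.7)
My plan is to prove the bound by a secondary induction on the level index $l \in \{0, 1, \ldots, \ell\}$, establishing the auxiliary estimate $\Wd(w, s_l'(w)) \leq C_l \cdot \Wd(v,w)$, where $C_0 = 0$ and the recurrence $C_{l+1} \leq (1+\varepsilon)^2 (2 + C_l)$ unrolls to $C_l \leq 2l(1+\varepsilon)^{2l}$. Taking $l = \ell$ and adding the triangle inequality $\Wd(v, s_\ell'(w)) \leq \Wd(v, w) + \Wd(w, s_\ell'(w))$ to the first summand of the claim, I then conclude
\[
\Wd(v, s_\ell'(w)) + \Wd(s_\ell'(w), w) \leq (1 + 2C_\ell)\,\Wd(v, w) \leq (4\ell+1)(1+\varepsilon)^{4\ell}\,\Wd(v,w),
\]
the stated bound admitting a slack factor of roughly $(1+\varepsilon)^{2\ell}$ over my analysis.

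The base case $l=0$ is immediate because $s_0'(w) = w$. For the inductive step at $l<\ell$, the minimality of $\ell$ forces $s_l'(w) \notin S_l'(v)$, so the definition of the bunch yields $\Wdp_l(v, s_{l+1}'(v)) \leq \Wdp_l(v, s_l'(w))$. Chaining this with the PDE guarantees $\Wd \leq \Wdp_l \leq (1+\varepsilon)\Wd$, the triangle inequality, and the inductive hypothesis gives $\Wd(v, s_{l+1}'(v)) \leq (1+\varepsilon)(1+C_l)\Wd(v,w)$. On the $w$-side, because $s_{l+1}'(w)$ is the $\Wdp_{l+1}$-closest $S_{l+1}$-node to $w$ and $s_{l+1}'(v)\in S_{l+1}$, I have $\Wd(w, s_{l+1}'(w)) \leq \Wdp_{l+1}(w, s_{l+1}'(v)) \leq (1+\varepsilon)\Wd(w, s_{l+1}'(v))$; a further triangle inequality $\Wd(w, s_{l+1}'(v)) \leq \Wd(v,w) + \Wd(v, s_{l+1}'(v))$ combined with the preceding estimate produces the claimed recurrence.

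The main technical obstacle in the above plan is justifying the two PDE upper bounds I invoked: these require hop-count prerequisites $h_{v,s_l'(w)}\leq h_l$ and $h_{w,s_{l+1}'(v)}\leq h_{l+1}$, which are not automatic from the sampling construction since $s_l'(w)$ need not be close to $v$ in hops (and symmetrically for $s_{l+1}'(v)$ and $w$). The cleanest resolution is to emulate the trick used in the proof of \lemmaref{lemma:stretch}: route each approximation step through the unprimed ($\Wd$-closest) landmarks $s_l(w)$ and $s_{l+1}(v)$, to which Statement~2 of \lemmaref{lemma:sampling_ok} applies unconditionally, and then transfer the estimate back to the primed nodes by observing that the $\Wdp$-closest $S$-node is at $\Wdp$-distance at most that of any fixed competitor. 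A union bound over the $k-1$ levels consolidates the resulting w.h.p.\ events into the single w.h.p.\ guarantee claimed by the lemma.
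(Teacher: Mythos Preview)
Your approach is essentially the same as the paper's: both argue by induction on $l\in\{0,\ldots,\ell\}$, establishing $\Wd(w,s_l'(w))\leq 2l(1+\varepsilon)^{2l}\Wd(v,w)$, and both resolve the hop-count obstacle for the PDE upper bounds via the case split embodied in Statements~1, 2, and~4 of \lemmaref{lemma:sampling_ok} (the paper invokes Statements~1 and~4 directly, whose ``otherwise'' branch routes through the unprimed landmark exactly as you propose). The paper additionally carries $\Wd(v,s_l'(w))\leq (2l+1)(1+\varepsilon)^{2l}\Wd(v,w)$ as a second induction hypothesis, but as you observe this follows from the first by a single triangle inequality, which is also why your final constant is tighter by the $(1+\varepsilon)^{2\ell}$ slack you noted.
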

\begin{proof}
For $0\leq l\leq \ell$, we prove by induction on $l$ that
\begin{equation*}
\Wd(w,s_l'(w))\leq (1+\varepsilon)^{2l}2l\Wd(v,w)
\end{equation*}
and that
\begin{equation*}
\Wd(v,s_l'(w))\leq (1+\varepsilon)^{2l}(2l+1)\Wd(v,w)
\end{equation*}
w.h.p. For $l=0$, trivially $\Wd(v,s_0'(w))=\Wd(v,w)$ and
$\Wd(w,s_0'(w))=\Wd(w,w)=0$. For the step from $l$ to $l+1$, we make the
intermediate claim that
\begin{equation*}
\Wd(v,s_{l+1}'(v))\leq (1+\varepsilon)^{2l+1}(2l+1)\Wd(v,w).
\end{equation*}
If $\Wdp_{l+1}(v,s_l'(w))\leq (1+\varepsilon)\Wd(v,s_l'(w))$, then
\begin{equation*}
\Wd(v,s_{l+1}'(v))\leq
\Wdp_{l+1}(v,s_{l+1}'(v))\leq \Wdp_{l+1}(v,s_l'(w))\leq
(1+\varepsilon)\Wd(v,s_l'(w))
\end{equation*}
w.h.p., where the second last step exploits that $s_l'(w)\notin S_l'(v)$ by the
definition of $\ell>l$, but $s_{l+1}'(v)\in S_l'(v)$ w.h.p. Otherwise,
Statements 1 and 4 of \lemmaref{lemma:sampling_ok} yield that
$\Wdp_{l+1}(v,s_{l+1}'(v))\leq (1+\varepsilon)\Wd(v,s_l'(w))$, resulting in the
same bound on $\Wd(v,s_{l+1}'(v))$. Either way, applying the induction
hypothesis shows the claim.

We now can apply the triangle inequality to see that
\begin{equation*}
\Wd(w,s_{l+1}'(v))\leq \Wd(v,w)+\Wd(v,s_{l+1}'(v))\leq
(1+\varepsilon)^{2l+1}(2(l+1))\Wd(v,w).
\end{equation*}
If $\Wdp_{l+1}(w,s_{l+1}'(v)) \leq (1+\varepsilon)\Wd(w,s_{l+1}'(v))$, then
\begin{equation*}
\Wd(w,s_{l+1}'(w))\leq \Wdp_{l+1}(w,s_{l+1}'(w))\leq \Wdp_{l+1}(w,s_{l+1}'(v))
\leq (1+\varepsilon)\Wd(w,s_{l+1}'(v)).
\end{equation*}
Otherwise, Statements 1 and 4 of \lemmaref{lemma:sampling_ok} imply that
$\Wdp_{l+1}(w,s_{l+1}'(w))\leq (1+\varepsilon)\Wd(w,s_{l+1}'(v))$ w.h.p.,
resulting in the same bound on $\Wd(w,s_{l+1}'(w))$. Therefore, in both cases,
\begin{equation*}
\Wd(w,s_{l+1}'(w))\leq (1+\varepsilon)^{2(l+1)}2(l+1)\Wd(v,w),
\end{equation*}
i.e., the first part of the hypothesis is shown for index $l+1$. The second part
readily follows by applying the triangle inequality once more, yielding
\begin{equation*}
\Wd(v,s_{l+1}'(w))\leq \Wd(v,w)+\Wd(w,s_{l+1}'(w))\leq
(1+\varepsilon)^{2(l+1)}(2(l+1)+1)\Wd(v,w)
\end{equation*}
w.h.p. This concludes the induction. Evaluating both statements of the induction
hypothesis for index $l=\ell$ completes the proof.
\end{proof}
\lemmaref{lemma:stretch_iterative} shows that for $\varepsilon \in o(1/k)$,
routing from $v$ to $w$ via $s_{\ell}'(w)\in S_{\ell}'(v)$ for minimal $\ell$
achieves stretch $4k-3+o(1)$.
It remains to construct the hierarchy efficiently. We start with a
general algorithm.
\begin{lemma}\label{lemma:low_levels}
  For each level $l\in \{0,\ldots,k-1\}$, we can determine w.h.p.\ for
  all nodes $v$ the set $S_l'(v)$ and the respective distance and
  routing information in $\tilde{\BO}(\varepsilon^{-2}n^{(l+1)/k})$
  rounds, where the tables have size $\BO(n^{1/k}\log^2 n)$. Within
  this time, we can also determine labels of size $(1+o(1))\log n$ and
  tables of size $\BO(\log^2 n)$ at each node for routing from
  $s_l'(v)$ to $v$.
\end{lemma}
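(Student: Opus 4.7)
My plan is to run, for each level $l$ independently, the partial distance estimation algorithm from Corollary~\ref{coro:detection} with source set $S_l$, and then to invoke a tree-labeling scheme for the routing from $s_l'(v)$ back to $v$. Concretely, I would sample each node into $S_l$ independently with probability $p_l = n^{-l/k}$, so that $|S_l| \in \tilde{\Theta}(n^{1-l/k})$ w.h.p.\ by Chernoff. Then I apply Corollary~\ref{coro:detection} with $h_l = \sigma_l = c\log n / p_{l+1} = c\log n \cdot n^{(l+1)/k}$ (for $l<k-1$; for $l=k-1$ take $h_{k-1}=\sigma_{k-1}=|S_{k-1}|$), for sufficiently large $c\in \BO(1)$ and suitably small $\varepsilon$. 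This costs $\tilde{\BO}(\varepsilon^{-2} n^{(l+1)/k})$ rounds and yields at each $v$ a list $L_v$ of the $\sigma_l$ closest detected sources together with tables for routing to them with stretch $1+\varepsilon$.

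Next, I would argue that $v$ can correctly extract $S_l'(v)$ from $L_v$. Mirroring the calculation in Lemma~\ref{lemma:sampling_ok} but applied to the pair $(S_l,S_{l+1})$, the choice $\sigma_l \geq c\log n/p_{l+1}$ ensures that w.h.p.\ $s_{l+1}'(v)$ appears in $L_v$. If each source in $S_l$ annotates its broadcasts with a single bit indicating whether it also lies in $S_{l+1}$, then $v$ can identify $s_{l+1}'(v)$ as the closest $S_{l+1}$-tagged entry of $L_v$ and retain $S_l'(v)$ as the prefix of entries strictly smaller than this entry. A standard Thorup--Zwick sampling argument shows $|S_l'(v)| \in \tilde{\BO}(n^{1/k})$ w.h.p., so the retained per-node data fit into $\BO(n^{1/k}\log^2 n)$ bits, where the additional $\log n$ factor accommodates the routing information stored per unweighted sub-instance in Corollary~\ref{coro:detection}.

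For the tree-routing component, each $s \in S_l$ induces a tree $T_s$ consisting of all $v$ with $s_l'(v)=s$, rooted at $s$ via the approximately shortest paths computed in the first step. By Lemma~\ref{lemma:trees}, $T_s$ has depth $\tilde{\BO}(n^{(l+1)/k}/\varepsilon)$ and every node belongs to $\BO(\log n)$ such trees. I would invoke the Thorup--Zwick tree labeling scheme~\cite{thorup01}, which on a tree of depth $h$ runs in $\tilde{\BO}(h)$ rounds and produces labels of size $(1+o(1))\log n$ with per-node tables of size $\BO(\log^2 n)$. Because the per-node overlap is $\BO(\log n)$, all $|S_l|$ tree instances can be executed in parallel with only a $\BO(\log n)$-factor slowdown, keeping the total cost within $\tilde{\BO}(n^{(l+1)/k}/\varepsilon)\subseteq \tilde{\BO}(\varepsilon^{-2} n^{(l+1)/k})$.

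The step I expect to demand the most care is the tree-labeling simulation: one must verify that the Thorup--Zwick scheme tolerates being executed in parallel across overlapping trees under $\CONGEST$ bandwidth constraints, with each node time-sharing its incident edges among the $\BO(\log n)$ trees passing through it. A secondary subtlety is the bookkeeping that separates the $\sigma_l$ sources produced by PDE from the much smaller set $S_l'(v)$ that actually goes into the stored table, which is the reason the effective table size collapses from $\tilde{\BO}(\sigma_l)$ down to $\BO(n^{1/k}\log^2 n)$.
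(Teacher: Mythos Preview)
Your overall strategy matches the paper's, but the choice of $\sigma_l$ is off in a way that creates a real gap. You take $\sigma_l = c\log n / p_{l+1}\in \tilde{\Theta}(n^{(l+1)/k})$, whereas the paper runs $(1+\varepsilon)$-approximate $(S_l,h_{l+1},\sigma)$-estimation with $h_{l+1}=cn^{(l+1)/k}\log n$ but $\sigma = cn^{1/k}\log n$, \emph{independent of $l$}. With your $\sigma_l$, \corollaryref{coro:detection} hands back routing tables of size $\tilde{\BO}(n^{(l+1)/k})$, not $\BO(n^{1/k}\log^2 n)$. Your proposed fix---each node retains only the entries for $S_l'(v)$---does not preserve the routing guarantee: when $v$ forwards toward some $s\in S_l'(v)$, the next hop $u$ is determined by the fact that $s$ appeared in $u$'s \emph{full} $\sigma_l$-list, but there is no reason $s\in S_l'(u)$. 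Indeed, $u$ may sit close to some node of $S_{l+1}$, making $\Wdp_l(u,s_{l+1}'(u))$ small and pushing $s$ out of $S_l'(u)$; the exact Thorup--Zwick cluster-contiguity argument relies on the triangle inequality for $\Wd$ and does not transfer to the approximate $\Wdp_l$. After pruning, $u$ can no longer route to $s$.

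The paper sidesteps this entirely. Since a node of $S_l$ belongs to $S_{l+1}$ with conditional probability $p_{l+1}/p_l=n^{-1/k}$, already among the $\sigma=cn^{1/k}\log n$ nearest $S_l$-sources to $v$ (w.r.t.\ $\Wdp_l$) one lies in $S_{l+1}$ w.h.p.; hence $s_{l+1}'(v)$ and all of $S_l'(v)$ appear in the \emph{unpruned} length-$\sigma$ list, the table already has the claimed size, and the routing guarantee of \corollaryref{coro:detection} applies verbatim. Your ``mirroring \lemmaref{lemma:sampling_ok}'' step overshoots: that lemma samples into $S$ from $V$, so it needs $\sigma\geq c\log n/p$; here we sample into $S_{l+1}$ from $S_l$, so $\sigma\geq c\log n\cdot (p_l/p_{l+1})=cn^{1/k}\log n$ suffices.

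A smaller issue: for $l=k-1$ you set $h_{k-1}=|S_{k-1}|\in\tilde{\Theta}(n^{1/k})$, but the hop parameter must be large enough that every $v$ reaches every $s\in S_{k-1}$ within $h$ hops on a shortest path; this requires $h>n$, which is why the lemma's bound for the top level is $\tilde{\BO}(\varepsilon^{-2}n)$, not $\tilde{\BO}(\varepsilon^{-2}n^{1/k})$.
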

\begin{proof}
For a sufficiently large constant $c$, we perform $(1+\varepsilon)$-approximate
$(S_l,h_{l+1},\sigma)$-estimation with $h_{l+1}=cn^{(l+1)/k}\log n$ and
$\sigma=cn^{1/k}\log n$. For $l<k-1$, the probability that
$(\Wdp(v,s_{l+1}'(v)),s_{l+1}'(v))$ has index $i\geq \sigma$ if we order
$\{(\Wdp(v,s),s)\,|\,s\in S_l\}$ ascendingly is $(1-p_l/p_{l+1})^{\sigma} \in
n^{-\Omega(c)}$. The probability that $(\Wdp(v,s_{l+1}'(v)),s_{l+1}'(v))$ has
index $j\geq h_{l+1}$ if we order $\{(\Wdp(v,w),w)\,|\,w\in V\}$ ascendingly is
$(1-1/p_{l+1})^{h_{l+1}}\in n^{-\Omega(c)}$. By appending a bit to messages
indicating whether $s\in S_l$ is also in $S_{l+1}$, we can thus use
\corollaryref{coro:detection} to show that, w.h.p., we obtain suitable tables
for routing from $v\in V$ to $S_l(v)$ and $s_{l+1}(v)$ within the stated time
bound. If $l=k-1$, we have that $h_{l+1}>n$ and $|S_l|=|S_{k-1}|\leq \sigma$
w.h.p.; in this case, \corollaryref{coro:detection} shows that the construction
can be performed as well.

Regarding the second part of the statement, observe that analogously to
\lemmaref{lemma:trees}, the routing trees rooted at each node $s_{l+1}\in
S_{l+1}$ have depth $\BO(h_{l+1}/\varepsilon)$ and each node participates in at
most $\BO(\log n)$ of them. Thus, we can apply the construction from~\cite{thorup01}
to obtain labels (and tables) of size $(1+o(1))\log n$ for tree routing on each
of the trees in $\tilde{\BO}(h_{l+1}/\varepsilon)\subseteq
\tilde{\BO}(\varepsilon^{-2}n^{(l+1)/k})$ rounds. As each node participates in
$\BO(\log n)$ trees, the table size for this routing information is
$\BO(\log^2 n)$.
\end{proof}

\begin{theorem}\label{thm-spd}
Tables of size $\tilde{\BO}(n^{1/k})$ and labels of size $\BO(k\log n)$ for
routing with stretch $4k-3$ can be computed in the \Congest model in
$\tilde{\BO}(\SPD+n^{1/k})$ rounds.
\end{theorem}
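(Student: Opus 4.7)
The plan is to implement the Thorup--Zwick hierarchy introduced earlier in this subsection, but to drive each level's call to partial distance estimation with hop parameter $h=\BO(\SPD)$ rather than the $h_{l+1}=cn^{(l+1)/k}\log n$ used in \lemmaref{lemma:low_levels}. Sample $S_l$ by including each node independently with probability $p_l=n^{-l/k}$. For each level $l\in\{0,\ldots,k-1\}$, invoke \corollaryref{coro:detection} with source set $S_l$, hop parameter $h=\BO(\SPD)$, output size $\sigma=\Theta(n^{1/k}\log n)$, and a sufficiently small $\varepsilon\in o(1/k)$.

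Two observations justify these choices. First, a Chernoff bound as in \lemmaref{lemma:sampling_ok} shows that w.h.p.\ the node $s_{l+1}'(v)\in S_{l+1}$ has rank at most $\BO(n^{1/k}\log n)$ in the $\Wdp_l$-ordering of $S_l$ around $v$, so $\sigma=\Theta(n^{1/k}\log n)$ suffices for $v$ to learn both $s_{l+1}'(v)$ and all of $S_l'(v)$. Second, by definition of $\SPD$, every minimum-hop shortest weighted path uses at most $\SPD$ hops, so every landmark of interest (each $u\in S_l'(v)\cup\{s_{l+1}'(v)\}$) satisfies $h_{v,u}\leq \SPD$ and is therefore assigned a finite $(1+\varepsilon)$-approximate distance by the PDE. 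With these parameters, \corollaryref{coro:detection} runs in $\tilde{\BO}(\SPD+n^{1/k})$ rounds per level, and summing over the $k$ levels contributes only $\tilde{\BO}(\SPD+n^{1/k})$ total rounds (the factor $k$ is absorbed by the polylog).

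Within the same time budget, following the argument of \lemmaref{lemma:low_levels}, the reverse routing information is constructed: the trees induced by the PDE-based forwarding at level $l$ have depth $\tilde{\BO}(\SPD)$ and each node lies in $\BO(\log n)$ of them (one per scaling instance $G_i$), so Thorup's tree labeling from \cite{thorup01} yields labels of size $(1+o(1))\log n$ per tree and tables of size $\BO(\log^2 n)$ in $\tilde{\BO}(\SPD)$ additional rounds. Concatenating across the $k$ levels produces node labels of total size $\BO(k\log n)$, while each node's table stores the PDE output per level ($\tilde{\BO}(n^{1/k})$ entries) plus a polylogarithmic amount of tree-routing information, for total table size $\tilde{\BO}(n^{1/k})$. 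Applying \lemmaref{lemma:stretch_iterative} with $\varepsilon\in o(1/k)$ then gives stretch $4k-3+o(1)$ when $v$ routes to $w$ via $s_\ell'(w)$ for the minimal $\ell$ with $s_\ell'(w)\in S_\ell'(v)$---a condition $v$ checks locally by scanning its stored lists against the relevant portion of $w$'s label.

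The principal obstacle is to show that replacing \lemmaref{lemma:low_levels}'s hop parameter $cn^{(l+1)/k}\log n$ by the much smaller $h=\BO(\SPD)$ still yields the correct top-$\sigma$ landmarks at every level. In the general case one needed $h\cdot p_{l+1}\geq c\log n$ merely to guarantee the existence of a sampled node within $h$ hops; here, this guarantee is provided directly by the definition of $\SPD$, since any shortest weighted path to a relevant landmark uses at most $\SPD$ hops. Consequently the weighted ranking returned by the PDE agrees, up to the $(1+\varepsilon)$-factor, with the ranking used by the Thorup--Zwick construction, regardless of how hop-far landmarks might be in a worst-case graph of the given $\SPD$.
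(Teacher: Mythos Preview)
Your proposal is correct and follows essentially the same route as the paper: replace the hop parameter in \lemmaref{lemma:low_levels} by $h=\SPD$, observe that $h_{v,w}\leq \SPD$ for all $v,w\in V$ makes this feasible, and invoke \lemmaref{lemma:stretch_iterative} for the stretch bound. Your choice of $\varepsilon\in o(1/k)$ is in fact slightly more careful than the paper's $\varepsilon\in\Theta(1/\log n)$, since the latter only keeps $(1+\varepsilon)^{4(k-1)}$ bounded when $k\in o(\log n)$; this is why you obtain stretch $4k-3+o(1)$ rather than the exact $4k-3$ stated in the theorem.
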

\begin{proof}
We choose $\varepsilon \in \Theta(1/\log n)$. Then
\lemmaref{lemma:stretch_iterative} shows that the stretch of the routing scheme
will be $2(k-1)+2(k-1)+1=4k-3$. To construct the labels and tables, we use the
approach of \lemmaref{lemma:low_levels}, but with $h:=\SPD$. This is feasible,
as by definition $h\geq h_{v,w}$ for all $v,w\in V$. As $\sigma \in
\tilde{\Theta}(n^{1/k})$, and since we may assume w.l.o.g.\ that $k\in
\BO(\log n)$, the claimed running time bound follows.
\end{proof}

Unfortunately, \theoremref{thm-spd} gives a good running time guarantee only
when $\SPD$ is small. Worse, the strategy can be applied only if an upper bound
on $\SPD$ is known (and the running time depends on that bound), unlike the
algorithm of running time $\tilde{\BO}(\SPD\cdot n^{1/k})$
from~\cite{dassarma12}.\footnote{Their algorithm only handles distance queries
and assumes that also the table of the destination can be accessed (i.e., the
lables are identical to the tables). Both assumptions can be removed to achieve
the same properties as our solution within $\tilde{\BO}(\SPD\cdot n^{1/k})$
rounds.} On the other hand, applying \lemmaref{lemma:low_levels} to all levels
(without modifying $h$) results in running time $\tilde{\BO}(n)$. In the
remainder of this section, we explain how to improve on \theoremref{thm-spd} by
``short-circuiting'' the higher levels of the hierarchy. This approach yields
better results when the hop diameter is small.

We now describe the construction. Let $l_0<k-1$ be some level to be determined
later. We will ``truncate'' the construction at level $l_0$ by constructing a
\emph{skeleton graph} as follows.
\begin{definition}[$l_0$ skeleton graph]
The \emph{skeleton graph on level $l_0$} is
$G(l_0)=(S_{l_0},E_{l_0},\Wd)$, where $\{s,t\}\in E_{l_0}$ if and only
if $h_{s,t}\leq cn^{l_0/k}\log n$ for a sufficiently large constant
$c$. We denote $ h_{l_0}:=cn^{l_0/k}\log n$.
\end{definition}
The $l_0$ skeleton graph preserves the original skeleton distances, as the
following lemma states.

\begin{lemma}\label{lemma:distance_preserved}
For any $\varepsilon>0$, $h,\sigma\in \N$, and $S\subseteq S_{l_0}$,
denote by $\Wd_{S_{l_0}}$ the distance function resulting from solving
$(1+\varepsilon)$-approximate $(S_{l_0},h_{l_0},|S_{l_0}|)$-estimation and
by $\Wd_s$ the distance function resulting from solving
$(1+\varepsilon)$-approximate $(S,ch\log n,\sigma)$-estimation on $G(l_0)$,
where $c$ is a sufficiently large constant. Then, w.h.p.,
\begin{equation*}
\Wdp(v,s):=\min_{t\in S_{l_0}}\{\Wdp_{S_{l_0}}(v,t)+\Wdp_S(t,s)\}
\end{equation*}
is a suitable distance function for $(1+\varepsilon)$-approximate
$(S,h\cdot h_{l_0},\sigma)$-estimation on $G$.
\end{lemma}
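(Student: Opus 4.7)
The plan is to verify the two properties that $\Wdp$ must satisfy per \defref{def-pde}: (i) $\Wdp(v,s)\geq \Wd(v,s)$ for all $v\in V$ and $s\in S$, and (ii) $\Wdp(v,s)\leq (1+\varepsilon)\Wd(v,s)$ whenever $h_{v,s}\leq h\cdot h_{l_0}$. The composition of two $(1+\varepsilon)$-approximate PDE guarantees naturally produces a $(1+\varepsilon)^2$ factor, which I would absorb by a constant rescaling of $\varepsilon$ in the two underlying PDE instances.

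The no-under-estimation property is the easier direction. The key fact is that each edge $\{s,t\}\in E_{l_0}$ has weight $\Wd(s,t)$, the actual $G$-distance between its endpoints, so every path in $G(l_0)$ maps to a walk in $G$ of the same total weight, giving $\Wd^{G(l_0)}(t,s)\geq \Wd(t,s)$. Combined with the PDE guarantees $\Wdp_{S_{l_0}}(v,t)\geq \Wd(v,t)$ and $\Wdp_S(t,s)\geq \Wd^{G(l_0)}(t,s)$, the $G$-triangle inequality yields $\Wdp_{S_{l_0}}(v,t)+\Wdp_S(t,s)\geq \Wd(v,s)$ for every $t\in S_{l_0}$, and taking the minimum preserves the bound.

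For the approximation direction I would fix a minimum-hop shortest $v$-to-$s$ path $P$, which has at most $h\cdot h_{l_0}$ hops, and use the sampling of $S_{l_0}$ to build a short chain of skeleton nodes along $P$. Since each node lies in $S_{l_0}$ independently with probability $p_{l_0}=n^{-l_0/k}$ and $p_{l_0}h_{l_0}=c\log n$, a Chernoff-plus-union-bound argument analogous to the one in \lemmaref{lemma:stretch} shows that w.h.p.\ every sufficiently long window of $P$ contains a skeleton node. Let $t$ be the first skeleton encountered from $v$, so $h_{v,t}\leq h_{l_0}$, and then walk from $t$ to $s=t_j\in S\subseteq S_{l_0}$ greedily, at each step jumping to the farthest skeleton within the next $h_{l_0}$ hops. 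This produces a chain $t=t_0,\ldots,t_j=s$ of $G(l_0)$-edges whose weights sum to at most $\Wd(t,s)$ (the chain sits on a shortest $G$-path), so $\Wd^{G(l_0)}(t,s)\leq \Wd(t,s)$ and the $G(l_0)$-hop distance from $t$ to $s$ is at most $j$. The two PDE guarantees then give $\Wdp_{S_{l_0}}(v,t)\leq (1+\varepsilon)\Wd(v,t)$ (using $h_{v,t}\leq h_{l_0}$) and $\Wdp_S(t,s)\leq (1+\varepsilon)\Wd^{G(l_0)}(t,s)\leq (1+\varepsilon)\Wd(t,s)$ (using $j\leq ch\log n$), and the $G$-triangle inequality closes out the bound.

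The main obstacle is the quantitative claim $j\in \BO(h)$, which is what makes the second PDE's hop parameter $ch\log n$ just right. Merely guaranteeing one skeleton per $h_{l_0}$-window provides no useful upper bound on $j$, because a greedy jump might advance only a single hop. Strengthening Chernoff by choosing $c$ large enough so that, w.h.p., every window of length $h_{l_0}/2$ already contains a skeleton, forces each greedy jump to cover $\Omega(h_{l_0})$ hops, and then $j\leq 2h_{v,s}/h_{l_0}\in \BO(h)$ follows. This is the quantitative heart of the proof and the place where the logarithmic slack in the second PDE's hop parameter is spent; the remainder is a routine union bound over the polynomially many pairs $(v,s)$.
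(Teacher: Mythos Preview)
Your argument is correct and tracks the paper's proof closely. Two remarks.

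First, no $(1+\varepsilon)^2$ arises here: because $G(l_0)$ carries \emph{exact} $G$-distances as edge weights, your chain on $P$ witnesses $\Wd^{G(l_0)}(t,s)=\Wd(t,s)$ exactly, so
\[
\Wdp_{S_{l_0}}(v,t)+\Wdp_S(t,s)\leq(1+\varepsilon)\bigl(\Wd(v,t)+\Wd(t,s)\bigr)=(1+\varepsilon)\Wd(v,s)
\]
directly --- the two $(1+\varepsilon)$ factors add, they do not multiply. The rescaling you propose would alter the lemma's hypotheses (which fix both PDE instances at parameter~$\varepsilon$); the squared factor appears only in \corollaryref{coro:distance_preserved}, where $\tilde G(l_0)$ uses approximate edge weights.

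Second, the paper bounds $j$ more simply: it takes \emph{all} skeleton nodes on $P$ as the chain and upper-bounds their count by a single Chernoff tail estimate. Since $|P|\leq h\cdot h_{l_0}$ and each node lies in $S_{l_0}$ with probability $p_{l_0}$, the expected count is at most $p_{l_0}\,h\,h_{l_0}=ch\log n$, so w.h.p.\ there are $\BO(h\log n)$ of them --- which is precisely why the second PDE instance carries hop parameter $ch\log n$. A separate Chernoff bound shows consecutive skeleton nodes on $P$ are within $h_{l_0}$ hops, so the chain is a path in $G(l_0)$. Your greedy half-window argument is valid and even yields the sharper $j\in\BO(h)$, but it is an unnecessary detour; the logarithmic slack you mention is not actually spent in your argument, whereas it is exactly what the paper's direct count consumes.
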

\begin{proof}
By the triangle inequality, for any $v\in V$, $t\in S_{l_0}$, and $s\in S$,
\begin{equation*}
\Wd(v,s)\leq \Wd(v,t)+\Wd(t,s)\leq \Wdp_{S_{l_0}}(v,t)+\Wdp_S(t,s).
\end{equation*}
Now suppose $h_{v,s}\leq h\cdot h_{l_0}$ for some $v\in V$ and $s\in S$. The
expected number of nodes in $S_{l_0}$ on a shortest path from $v$ to $s$ of
$h_{v,s}$ hops is $p_{l_0}h_{v,s}\in \BO(h\log n)$. By Chernoff's bound, this
number is smaller than $ch\log n$ w.h.p., as $c$ is sufficiently large. Another
application of Chernoff's bound shows that the maximum hop distance between
nodes from $S_{l_0}$ on the path is bounded by $h_{l_0}$ w.h.p.

Denoting by $t_{v,s}\in S_{l_0}$ the first sampled node on the path, the above
shows that the following properties hold w.h.p.
\begin{compactitem}
\item $\Wd(v,s)= \Wd(v,t_{v,s})+\Wd(t_{v,s},s)$,
\item $\Wd_{G(l_0)}(t_{v,s},s)=\Wd(t_{v,s},s)$, where $\Wd_{G(l_0)}$ denotes
the weighted distance in $G(l_0)$,
\item $\Wdp_{S_{l_0}}(v,t_{v,s})\leq (1+\varepsilon)\Wd(v,t_{v,s})$, and
\item $\Wdp_S(t_{v,s},s)\leq (1+\varepsilon)\Wd_{G(l_0)}(t_{v,s},s)$.
\end{compactitem}
Overall, this yields
\begin{align*}
\Wdp(v,s) &=\min_{t\in S_{l_0}}\{\Wdp_{S_{l_0}}(v,t)+\Wdp_S(t,s)\}\\
&\leq \Wdp_{S_{l_0}}(v,t_{v,s})+\Wdp_S(t_{v,s},s)\\
&\leq (1+\varepsilon)(\Wd(v,t_{v,s})+\Wd_{G(l_0)}(t_{v,s},s))\\
&= (1+\varepsilon)(\Wd(v,t_{v,s})+\Wd(t_{v,s},s))\\
&=(1+\varepsilon)\Wd(v,s).\qedhere
\end{align*}
\end{proof}

\begin{corollary}\label{coro:distance_preserved}
If in the construction from \lemmaref{lemma:distance_preserved} we replace
$G(l_0)$ by the graph $\tilde{G}(l_0)$ constructed by solving
$(1+\varepsilon)$-approximate $(S_{l_0},h_{l_0},|S_{l_0}|)$-estimation and
assigning weight $\Wdp_{S_{l_0}}(s,t)$ to edge $\{s,t\}$, the resulting function
$\Wdp$ is a suitable distance function for $(1+\varepsilon)^2$-approximate
$(S,h\cdot h_{l_0},\sigma)$-estimation.
\end{corollary}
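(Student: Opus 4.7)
The plan is to piggy-back on the proof of \lemmaref{lemma:distance_preserved}, tracking how the additional layer of approximation introduced by using $\tilde{G}(l_0)$ instead of $G(l_0)$ compounds with the existing one.

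First, I would establish the relationship between the two skeleton graphs. For any edge $\{s,t\}\in E_{l_0}$ we have $h_{s,t}\leq h_{l_0}$, so the guarantee of $(1+\varepsilon)$-approximate $(S_{l_0},h_{l_0},|S_{l_0}|)$-estimation gives $\Wd(s,t)\leq \Wdp_{S_{l_0}}(s,t)\leq (1+\varepsilon)\Wd(s,t)$ w.h.p. Hence every edge weight in $\tilde{G}(l_0)$ is at least the weight of the corresponding edge in $G(l_0)$ and at most $(1+\varepsilon)$ times that weight, which immediately yields $\Wd_{G(l_0)}(s,t)\leq \Wd_{\tilde{G}(l_0)}(s,t)\leq (1+\varepsilon)\Wd_{G(l_0)}(s,t)$ for any $s,t\in S_{l_0}$, where hop counts along any common path are preserved. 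In particular, $\Wdp_S(t,s)\geq \Wd_{\tilde{G}(l_0)}(t,s)\geq \Wd(t,s)$, which together with the triangle inequality (exactly as in the lemma) proves the lower bound $\Wdp(v,s)\geq \Wd(v,s)$.

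For the upper bound, suppose $h_{v,s}\leq h\cdot h_{l_0}$. I would reuse the Chernoff argument from \lemmaref{lemma:distance_preserved} verbatim: along a min-hop shortest $v$-$s$-path, w.h.p.\ at most $ch\log n$ nodes of $S_{l_0}$ appear and consecutive sampled nodes lie within $h_{l_0}$ hops in $G$. Letting $t_{v,s}$ be the first such sampled node, the sampled nodes form a valid path in $\tilde{G}(l_0)$ of hop length $\leq ch\log n$, so the PDE guarantee on $\tilde{G}(l_0)$ yields $\Wdp_S(t_{v,s},s)\leq (1+\varepsilon)\Wd_{\tilde{G}(l_0)}(t_{v,s},s)\leq (1+\varepsilon)^2\Wd_{G(l_0)}(t_{v,s},s)=(1+\varepsilon)^2\Wd(t_{v,s},s)$. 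Combined with $\Wdp_{S_{l_0}}(v,t_{v,s})\leq (1+\varepsilon)\Wd(v,t_{v,s})$ and $\Wd(v,s)=\Wd(v,t_{v,s})+\Wd(t_{v,s},s)$, this gives $\Wdp(v,s)\leq (1+\varepsilon)^2\Wd(v,s)$.

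The only subtle point, which I view as the one place where care is required, is the hop-count bookkeeping on $\tilde{G}(l_0)$: the PDE call is parametrised by hop counts in $\tilde{G}(l_0)$, not by hop counts in $G$, and one must verify that the sampled path used above really has at most $ch\log n$ edges in $\tilde{G}(l_0)$. This is immediate once one observes that an edge of $\tilde{G}(l_0)$ exists between any two sampled nodes within $h_{l_0}$ hops in $G$, so the sampled nodes from the Chernoff argument induce a path in $\tilde{G}(l_0)$ with the same (or fewer) edges. With that, everything else reduces to bookkeeping on the $(1+\varepsilon)$ factors, and the stated $(1+\varepsilon)^2$ bound follows.
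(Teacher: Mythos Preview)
Your proof is correct and follows essentially the same approach as the paper: the paper's argument consists of the single observation that edge weights in $\tilde{G}(l_0)$ satisfy $\Wd(s,t)\leq \Wdp_{S_{l_0}}(s,t)\leq (1+\varepsilon)\Wd(s,t)$ and then defers to the proof of \lemmaref{lemma:distance_preserved} verbatim, picking up one extra $(1+\varepsilon)$ factor. You have simply made explicit the details the paper leaves implicit, including the hop-count bookkeeping on $\tilde{G}(l_0)$.
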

\begin{proof}
By definition, for all edges $\{s,t\}\in E_{l_0}$, we have that
$\Wdp_{S_{l_0}}(s,t)\leq (1+\varepsilon)\Wd(s,t)$. Also, clearly
$\Wdp_{S_{l_0}}(s,t)\geq \Wd(s,t)$ for all $s,t\in S_{l_0}$. Therefore, we can
reason analogously to \lemmaref{lemma:distance_preserved}, except for incurring
another factor of $1+\varepsilon$ in stretch.
\end{proof}

Next, we consider the simulation of the truncated levels in the hierarchy.
\begin{lemma}\label{lemma:long_distance}
For any integer $l_0\geq k/2+1$, we can construct level $l\geq l_0$ of the
routing hierarchy in $\tilde{\BO}(\varepsilon^{-2}(n^{l_0/k}+ n^{(k-l_0)/k}D))$
rounds w.h.p., where the tables and labels are of size $\BO(n^{1/k})$ and
$\BO(\log n)$, respectively.
\end{lemma}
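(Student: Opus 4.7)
The plan is to construct the higher levels of the hierarchy on a skeleton graph rather than on $G$ directly: first build an approximate skeleton $\tilde{G}(l_0)$ so that its edges and weights are known to every node of $S_{l_0}$, then run the algorithm of \lemmaref{lemma:low_levels} on $\tilde{G}(l_0)$, emulating each skeleton message by a broadcast through a BFS tree of $G$.

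In the first stage I would compute a BFS tree of $G$ in $\BO(D)$ rounds and apply \corollaryref{coro:detection} to solve $(1+\varepsilon)$-approximate $(S_{l_0},h_{l_0},|S_{l_0}|)$-estimation. A Chernoff bound gives $|S_{l_0}|\in\tilde{\Theta}(n^{(k-l_0)/k})$ w.h.p., and since $h_{l_0}=\tilde{\Theta}(n^{l_0/k})$ this stage costs $\tilde{\BO}(\varepsilon^{-2}n^{l_0/k}+D)$ rounds. By \corollaryref{coro:distance_preserved}, the resulting weights $\Wdp_{S_{l_0}}$ then define a graph $\tilde{G}(l_0)$ on $S_{l_0}$ whose distances $(1+\BO(\varepsilon))$-approximate those of $G$ between pairs at hop distance $\leq h_{l_0}$, and each node of $S_{l_0}$ knows all its incident edges in $\tilde{G}(l_0)$. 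In the second stage, for each level $l\in\{l_0,\ldots,k-1\}$ I would invoke \corollaryref{coro:detection} on $\tilde{G}(l_0)$ with sources $S_l$, hop bound $h_l^{\mathrm{skel}}=\tilde{\Theta}(n^{(l-l_0)/k})$, and $\sigma=\tilde{\Theta}(n^{1/k})$; the Chernoff argument behind \lemmaref{lemma:low_levels} guarantees w.h.p.\ that every node of $S_{l_0}$ thereby learns its (up to) $\sigma$ closest level-$l$ sources in $\tilde{G}(l_0)$, which suffices to determine $s'_l(v)$ and $S'_{l-1}(v)$. Each skeleton round is simulated by broadcasting the at-most $|S_{l_0}|$ messages of $S_{l_0}$ through the BFS tree of $G$; by \lemmaref{lemma:few_messages}, the total number of broadcasts issued per level is only $\tilde{\BO}(|S_{l_0}|\sigma^2/\varepsilon^2)$, so pipelining through the BFS tree yields a real-time cost of
\[
\tilde{\BO}\!\left(\varepsilon^{-2}\bigl(|S_{l_0}|\sigma^2+D(h_l^{\mathrm{skel}}+\sigma)\bigr)\right)=\tilde{\BO}\!\left(\varepsilon^{-2}\bigl(n^{(k-l_0+2)/k}+Dn^{(l-l_0)/k}\bigr)\right).
\]

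Summing this over $l\in\{l_0,\ldots,k-1\}$ and combining with the first stage yields a total running time of $\tilde{\BO}(\varepsilon^{-2}(n^{l_0/k}+n^{(k-l_0+2)/k}+Dn^{(k-l_0)/k}))$; the hypothesis $l_0\geq k/2+1$ is exactly what enforces $k-l_0+2\leq l_0$, absorbing the middle term into $n^{l_0/k}$ and giving the claimed bound. For the routing tables and labels I would follow the approach of \lemmaref{lemma:low_levels} verbatim: an analogue of \lemmaref{lemma:trees} shows that the approximate shortest-path trees in $\tilde{G}(l_0)$ have depth $\tilde{\BO}(h_l^{\mathrm{skel}}/\varepsilon)$ and that each node of $S_{l_0}$ participates in only $\BO(\log n)$ of them, so the Thorup--Zwick tree-labeling scheme---again emulated through the BFS tree---fits within the same budget and produces $(1+o(1))\log n$-bit labels together with $\BO(n^{1/k})$ table entries per level. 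The main obstacle I expect is the pipelining accounting: one needs to argue that the $D$-penalty of the BFS simulation is charged per skeleton round rather than per broadcast, so that \lemmaref{lemma:few_messages}'s $\sigma^2$-budget really controls the communication term---without this finer amortization, the hypothesis $l_0\geq k/2+1$ would not suffice to absorb the dominant $|S_{l_0}|\sigma^2$ summand.
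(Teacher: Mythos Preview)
Your core approach is the same as the paper's: build $\tilde G(l_0)$ via PDE on $G$, then simulate the PDE algorithm of \corollaryref{coro:detection} on $\tilde G(l_0)$ by pipelining all skeleton messages through a BFS tree of $G$, using the $\tilde\BO(\sigma^2)$ per-node message bound from \lemmaref{lemma:few_messages} to control the non-$D$ term. You also correctly isolate the role of the hypothesis $l_0\ge k/2+1$ and the pipelining subtlety that the $D$-penalty is charged per simulated round, not per message.

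The gap is in the tree-labeling part. First, you state that only the nodes of $S_{l_0}$ learn the skeleton outputs, and that the Thorup--Zwick labeling on the skeleton trees is ``emulated through the BFS tree''. But emulating $\tilde\Theta(h_l^{\mathrm{skel}}/\varepsilon)$ rounds of an arbitrary skeleton algorithm costs $\tilde\Theta(h_l^{\mathrm{skel}}\cdot|S_{l_0}|)$ in the message term, which is \emph{not} dominated by $|S_{l_0}|\sigma^2$ once $l_0<k-3$; the $\sigma^2$-budget argument is specific to the PDE algorithm and does not transfer. The paper avoids this: since the BFS broadcast already delivers every skeleton message to \emph{every} node of $G$, the entire graph $\tilde G(l_0)$ together with all PDE outputs on it becomes globally known, so the skeleton-tree labels are computed \emph{locally} at zero additional communication cost, and non-skeleton nodes $v$ can compute $\Wdp(v,s)=\min_{t\in S_{l_0}}\{\Wdp_{S_{l_0}}(v,t)+\Wdp_S(t,s)\}$ themselves.

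Second, your sketch produces labels only for routing within $\tilde G(l_0)$; it does not address the ``last mile'' from the final skeleton node to a destination $v\notin S_{l_0}$. The paper uses a two-level scheme: the routing tree rooted at $s_l'(v)$ is split into its projection onto $\tilde G(l_0)$ (labeled locally, as above) and the maximal $G$-subtrees rooted at skeleton nodes that contain no internal skeleton nodes. These subtrees have depth $\tilde\BO(h_{l_0}/\varepsilon)$ and, by the argument of \lemmaref{lemma:trees}, each $G$-node lies in $\tilde\BO(1)$ of them, so they can be labeled directly in $G$ within the $\tilde\BO(\varepsilon^{-2}n^{l_0/k})$ budget. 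Each node then carries two $(1+o(1))\log n$-bit tree labels.
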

\begin{proof}
Recall that $\varepsilon\in \BO(1)$. We choose $\varepsilon'\in
\Theta(\varepsilon)$ such that $(1+\varepsilon')^2=(1+\varepsilon)$. We solve
$(1+\varepsilon')$-approximate $(S_{l_0},h_{l_0},|S_{l_0}|)$-estimation using
\corollaryref{coro:detection}, w.h.p.\ requiring
\begin{equation*}
\tilde{\BO}(\varepsilon^{-2}(h_{l_0}+|S_{l_0}|)+D)=
\tilde{\BO}\left(\varepsilon^{-2}\left(n^{l_0/k}+n^{(k-l_0)/k}\right)+D\right)
\subseteq \tilde{\BO}\left(\varepsilon^{-2}n^{l_0/k}+D\right)
\end{equation*}
rounds. Our goal is to apply \corollaryref{coro:distance_preserved}. To this
end, we will simulate, for $h=h_{l+1}/h_{l_0}$ and sufficiently a sufficiently
large constant $c$, $(1+\varepsilon')$-approximate $(S_l,c\,h\log
n,c\,n^{1/k}\log n)$-estimation on $\tilde{G}(l_0)$, in a way such that
\emph{all} nodes will learn the output of \emph{all} nodes in $S_{l_0}$. As in
\lemmaref{lemma:low_levels}, a bit indicating whether a source is in $S_{l+1}$
is added to messages if $l<k-1$.

Before we explain how to do this, let us show how this permits to construct
level $l$ of the routing hierarchy. From the collected information, w.h.p.\
nodes can locally compute the distance function $\Wdp$ from
\corollaryref{coro:distance_preserved} for the $\sigma$ closest nodes in $S_l$
w.r.t.\ $\Wdp$ and, analogously to \lemmaref{lemma:low_levels}, derive their
table for routing from $v$ to $S_l'$ and $s_l'(v)$.

To enable tree routing from $s_l'(v)$ to $v$, split the tree rooted at $s_l'(v)$
into the unique maximal subtrees rooted at $s\in S_{l_0}$ that contain no
internal nodes from $S_{l_0}$ (i.e., all such nodes are either the root or
leaves). By \lemmaref{lemma:trees}, these subtrees have depth at most
$\tilde{\BO}(h_{l_0})$. We use separate labeling schemes for the (globally
known) tree on $\tilde{G}(l_0)$ that describes the connections between nodes in
$S_{l_0}$ in the routing tree rooted at $s_l'(v)$ and the subtrees rooted at
each $s\in S_{l_0}$. The former can be computed locally. The latter can be
labeled in time $\tilde{\BO}(\varepsilon^{-2}h_{l_0})$, provided that each node
participates in $\tilde{\BO}(1)$ different trees only. Analogously to
\lemmaref{lemma:trees}, this holds true because each routing decision must
correspong to one of the $\BO(\log n)$ top entries of the routing tables (either
for routing in $G$ to some node in $S_{l_0}$ or in $\tilde{G}(l_0)$). This
approach requires each node in the tree to store two labels of size
$(1+o(1))\log n$. Routing can now be executed by determining the next node from
$S_{l_0}$ to visit on the path from $s_l'(v)$ to $v$ (if there still is one) and
then use the label for the current subtree to find the next routing hop.

It remains to discuss how to solve $(1+\varepsilon')$-approximate
$(S_l,h,c\,n^{1/k}\log n)$-estimation on $\tilde{G}(l_0)$ quickly. Recall that
each node in $S_{l_0}$ knows its neighbors and the weights of incident edges
from the solution of $(1+\varepsilon')$-approximate
$(S_{l_0},h_{l_0},|S_{l_0}|)$-estimation computed earlier. We simulate the
algorithm given by \corollaryref{coro:detection}, exploiting that each node
broadcasts in only $\tilde{\BO}(n^{2/k})$ rounds in total. For each simulated
round $i\in \{1,\ldots,h+\sigma\}$, we pipeline the communication over a BFS
tree, which takes $\BO(M_i+D)$ rounds in $G$, where $M_i$ is the number of nodes
in $\tilde{G}(l_0)$ that broadcast in simulated round $i$; this time bound
includes $\BO(D)$ rounds for global synchronization of when the next simulated
round starts. Therefore, the total number of communication rounds in $G$ is
\begin{equation*}
\sum_{i=1}^{h+\sigma} \BO(M_i+D) \subseteq
\tilde{\BO}(\sigma^2|S_{l_0}|+(h+\sigma)D)
\subseteq \tilde{\BO}(n^{2/k}\cdot n^{(k-l_0)/k}+n^{(l-l_0+1)/k}D))
\subseteq \tilde{\BO}(n^{l_0/k}+n^{(k-l_0)/k}D))
\end{equation*}
w.h.p., as $|S_{l_0}|\in \tilde{\BO}(n^{(k-l_0)/k})$ w.h.p. The bounds on table
and lable size follow from \lemmaref{lemma:low_levels} and the above discussion
of the tree labeling scheme.
\end{proof}

We can now put all the pieces together to obtain the following result.
\begin{theorem}\label{thm-compact}
Suppose we are given $k\in \N$ and some integer $k/2+1\leq l_0\leq k$. Then
tables of size $\tilde{\BO}(n^{1/k})$ and labels of size $\BO(k\log n)$ enabling
routing and distance approximation with stretch $4k-3+o(1)$ can be constructed in
$\tilde{\BO}(n^{l_0/k}+ n^{(k-l_0)/k}D)$ rounds w.h.p.
\end{theorem}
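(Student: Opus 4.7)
The plan is to assemble the hierarchy by combining \lemmaref{lemma:low_levels} (for levels $l<l_0$) and \lemmaref{lemma:long_distance} (for levels $l\geq l_0$), and then to invoke \lemmaref{lemma:stretch_iterative} for the stretch analysis. Concretely, I would first choose $\varepsilon\in \Theta(1/\log n)$ so that $(1+\varepsilon)^{4k}(4k-3)=4k-3+o(1)$, and sample the levels as in the generic Thorup--Zwick construction, fixing $p_l=n^{-l/k}$ so that $|S_l|\in \tilde{\Theta}(n^{(k-l)/k})$ w.h.p.

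Next, for each $l\in\{0,\ldots,l_0-1\}$, I would apply \lemmaref{lemma:low_levels} to build tables and labels for routing from $v$ to $S_l'(v)$, the pivot $s_l'(v)$, and the reverse tree routing. Since $l<l_0$, each of these steps costs $\tilde{\BO}(\varepsilon^{-2}n^{(l+1)/k})$ rounds, so the overall cost of the low levels is dominated by $l=l_0-1$, giving $\tilde{\BO}(n^{l_0/k})$. For each remaining level $l\in\{l_0,\ldots,k-1\}$, I would apply \lemmaref{lemma:long_distance}, which uses the same $l_0$-skeleton graph $\tilde{G}(l_0)$ (constructed once from a $(1+\varepsilon')$-approximate $(S_{l_0},h_{l_0},|S_{l_0}|)$-estimation, where $\varepsilon'$ is chosen so that $(1+\varepsilon')^2\leq 1+\varepsilon$), and simulates the higher-level detection over this skeleton by pipelining through a global BFS tree. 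Each of these at most $k-l_0\in \BO(\log n)$ applications costs $\tilde{\BO}(\varepsilon^{-2}(n^{l_0/k}+n^{(k-l_0)/k}D))$ rounds w.h.p. Summing over all levels yields the claimed $\tilde{\BO}(n^{l_0/k}+n^{(k-l_0)/k}D)$ bound, since $k\in \BO(\log n)$ is absorbed in $\tilde{\BO}$.

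For the output parameters: \lemmaref{lemma:low_levels} and \lemmaref{lemma:long_distance} each yield, per level, tables of size $\tilde{\BO}(n^{1/k})$ and per-level labels of size $\BO(\log n)$. Concatenating the $k$ per-level labels (pivot identity and tree-routing labels to reach $v$ from $s_l'(v)$) produces a total label of size $\BO(k\log n)$ and total table size $\tilde{\BO}(n^{1/k})$, as required. Routing from $v$ to $w$ proceeds in the standard Thorup--Zwick fashion: find the minimal $\ell$ with $s_\ell'(w)\in S_\ell'(v)$, route from $v$ to $s_\ell'(w)$ using $v$'s level-$\ell$ table, then follow the tree-routing label down to $w$; distance approximation uses the sum of the stored approximate distances.

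The stretch analysis then reduces to \lemmaref{lemma:stretch_iterative}: with $\varepsilon\in o(1/k)$ we obtain $\Wd(v,s_\ell'(w))+\Wd(s_\ell'(w),w)\leq (1+\varepsilon)^{4\ell}(4\ell+1)\Wd(v,w)\leq 4k-3+o(1)$. The main obstacle is making the simulation of levels $l\geq l_0$ through $\tilde{G}(l_0)$ genuinely fit into the claimed time bound while retaining the stretch guarantee; this is exactly where \corollaryref{coro:distance_preserved} (which absorbs the extra $(1+\varepsilon')$ factor from replacing $G(l_0)$ by $\tilde{G}(l_0)$) and the message-count bound of \lemmaref{lemma:few_messages} (which keeps the BFS pipelining cost down to $\tilde{\BO}(n^{2/k}|S_{l_0}|+n^{(l-l_0+1)/k}D)$) must be combined carefully, exactly as already done inside \lemmaref{lemma:long_distance}. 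Given those lemmas, the proof is essentially a bookkeeping argument checking that the level-wise bounds add up as claimed.
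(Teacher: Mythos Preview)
Your proposal is correct and matches the paper's proof almost verbatim: the paper also builds levels $0,\ldots,l_0-1$ via \lemmaref{lemma:low_levels} and levels $l_0,\ldots,k-1$ via \lemmaref{lemma:long_distance}, then invokes \lemmaref{lemma:stretch_iterative} together with \corollaryref{coro:distance_preserved} for the $4k-3+o(1)$ stretch and sums the per-level table and label sizes. The one point to fix is your choice $\varepsilon\in\Theta(1/\log n)$: since you (correctly) allow $k\in\BO(\log n)$, this does not give $(1+\varepsilon)^{\BO(k)}=1+o(1)$; the paper takes $\varepsilon=1/\log^2 n$, which is consistent with the requirement $\varepsilon\in o(1/k)$ you yourself state later.
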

\begin{proof}
Fix $\varepsilon:= 1/\log^2 n$. W.l.o.g., $k\in \BO(\log n)$. We construct the
first $l_0$ levels of the hierarchy using \lemmaref{lemma:low_levels}, and the
remaining levels using \lemmaref{lemma:long_distance}. This yields the stated
running time, by \lemmaref{lemma:stretch_iterative} and
\corollaryref{coro:distance_preserved} stretch
$(1+\varepsilon)^{\BO(k)}(4(k-1)+1)\in 4k-3+o(1)$, table size
$\tilde{\BO}(kn^{1/k})=\tilde{\BO}(n^{1/k})$, and labels of size $\BO(k\log n)$.
\end{proof}

We can now choose an appropriate value for $l_0$ depending on $D$. If the
running time is worse than about $n^{2/3}$, we handle the higher levels simply
by making $\tilde{G}(l_0)$ known to all nodes and solving locally.
\begin{corollary}
For any integer $k>1$, tables of size $\tilde{\BO}(n^{1/k})$ and labels of size
$\BO(k\log n)$ enabling routing and distance approximation with stretch
$4k-3+o(1)$ can be constructed in 
\begin{equation*}
\tilde{\BO}\left(\min\{(Dn)^{1/2}\cdot n^{1/k},n^{2/3+2/(3k)}\}+D\right)
\end{equation*}
rounds w.h.p.
\end{corollary}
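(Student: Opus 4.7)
The plan is to obtain each of the two terms in the minimum by a separate choice of $l_0$, one via Theorem~\ref{thm-compact} applied as a black box, the other via a variant that short-circuits the top of the hierarchy by \emph{globally broadcasting} the skeleton graph. Throughout, set $\varepsilon:=1/\log^2 n$ exactly as in \theoremref{thm-compact}, so that stretch and table/label sizes automatically match the claim.

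For the first bound $(Dn)^{1/2}n^{1/k}$, I apply \theoremref{thm-compact} directly. Treating $l_0$ as a real parameter, $n^{l_0/k}+n^{(k-l_0)/k}D$ is minimized at $l_0^*:=k/2+(k\log_n D)/2$, with both terms equal to $(nD)^{1/2}$. Taking the integer choice $l_0:=\max\{k/2+1,\min\{k,\lceil l_0^*\rceil\}\}$ costs at most a multiplicative factor $n^{1/k}$ in the dominant term due to rounding, yielding the bound. The corner cases where $l_0^*\notin[k/2+1,k]$ correspond either to $D$ so small that $(Dn)^{1/2}n^{1/k}$ trivially dominates the other summand, or to $D$ so large that the bound is absorbed into the additive $+D$.

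For the second bound $n^{2/3+2/(3k)}$, I modify \theoremref{thm-compact} at $l_0:=\lceil 2k/3\rceil$, assuming $k$ is large enough that $l_0\geq k/2+1$ (otherwise $n^{2/3+2/(3k)}\geq n$ and any naive $\tilde{\BO}(n)$-round algorithm, e.g., \theoremref{thm-wasps}, already suffices). Build the first $l_0$ levels with \lemmaref{lemma:low_levels} in $\tilde{\BO}(n^{l_0/k})\subseteq\tilde{\BO}(n^{2/3+2/(3k)})$ rounds, and construct $\tilde{G}(l_0)$ as in the opening step of \lemmaref{lemma:long_distance}. Instead of pipelining source detection on $\tilde{G}(l_0)$ across a BFS tree, I simply \emph{broadcast $\tilde{G}(l_0)$ itself} along a BFS tree of $G$: w.h.p.\ $|S_{l_0}|\in\tilde{\BO}(n^{(k-l_0)/k})$, so the edge set has size $\tilde{\BO}(n^{2(k-l_0)/k})\subseteq\tilde{\BO}(n^{2/3})$, and pipelining its dissemination takes $\tilde{\BO}(n^{2/3}+D)$ rounds. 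Once every node knows $\tilde{G}(l_0)$, each node locally computes the upper levels of the hierarchy: the distance function $\Wdp$ of \corollaryref{coro:distance_preserved}, the sets $S_l'(v)$ and pivots $s_l'(v)$ for $l>l_0$, and the tree-routing labels on the subtrees of $\tilde{G}(l_0)$ rooted at high-level pivots. Routing from any $v\in V$ into $S_{l_0}$ still uses the tables produced by \lemmaref{lemma:low_levels} at the lower levels.

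The main obstacle is to verify the second case carefully: one must check that all upper-level routing and label information can truly be recovered from the globally-known $\tilde{G}(l_0)$ within the $\tilde{\BO}(n^{1/k})$-table and $\BO(k\log n)$-label budgets, and that the stretch analysis of \lemmaref{lemma:stretch_iterative} together with \corollaryref{coro:distance_preserved} is indifferent to \emph{how} the queries on $\tilde{G}(l_0)$ are realized distributedly (it is, since both lemmas treat $\tilde{G}(l_0)$ only through its $(1+\varepsilon)$-approximate distance function). Taking the minimum of the two constructions, with the additive $+D$ folded in for BFS-tree set-up, yields the claimed running time w.h.p.
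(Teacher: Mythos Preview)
Your approach is essentially the same as the paper's: both obtain the first term by applying \theoremref{thm-compact} with $l_0$ chosen to balance $n^{l_0/k}$ against $n^{(k-l_0)/k}D$ (the paper takes the integer closest to $l_0^*$ clipped to $[k/2+1,k-1]$, you take $\lceil l_0^*\rceil$ clipped to $[k/2+1,k]$), and both obtain the second term by building levels $0,\ldots,l_0-1$ via \lemmaref{lemma:low_levels}, then broadcasting the $\tilde{\BO}(|S_{l_0}|^2)$ edges of $\tilde{G}(l_0)$ over a BFS tree and finishing locally, balancing $n^{l_0/k}$ against $n^{2(k-l_0)/k}$ at $l_0\approx 2k/3$.

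One small gap: your fallback for the second construction is off. You assume $l_0=\lceil 2k/3\rceil\geq k/2+1$ and claim that otherwise $n^{2/3+2/(3k)}\geq n$, but the latter holds only for $k\leq 2$, whereas the former fails for $k=3$ (where $l_0=2<5/2$). In fact the constraint $l_0\geq k/2+1$ is \emph{not needed} for the broadcast variant: it appears in \lemmaref{lemma:long_distance} only to ensure $\sigma^2|S_{l_0}|\in\tilde{\BO}(n^{l_0/k})$ in the pipelining bound, which you are bypassing entirely. The opening step of \lemmaref{lemma:long_distance} merely needs $|S_{l_0}|\in\tilde{\BO}(n^{l_0/k})$, i.e., $l_0\geq k/2$, which your choice $l_0=\lceil 2k/3\rceil$ always satisfies. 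So simply drop the spurious constraint and your argument goes through for all $k\geq 3$; for $k=2$ the naive $\tilde{\BO}(n)$ algorithm indeed suffices, as you say.
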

\begin{proof}
If $k=2$, the minimum is attained for the second term. If $k\geq 3$, choose
$l_0$ as the closest integer to $k(\log D/\log n+1)/2$, however, at least
$k/2+1$ and at most $k-1$. By \theoremref{thm-compact}, we can compute compact
routing tables within
\begin{equation*}
\tilde{\BO}\left(n^{l_0/k}+ n^{(k-l_0)/k}D\right)\subseteq
\tilde{\BO}\left((Dn)^{1/2}n^{1/k}+D\cdot n^{1/k}\right),
\end{equation*}
where the first summand covers the case that $l_0\neq k-1$ and the second one
the possibility that $l_0=k-1$, implying that $l_0\geq k-3/2$ and hence
$Dn^{1/k}\geq n^{1-1/(2k)}> n^{l_0/k}$.
Note that in the latter case, we have that $Dn^{1/k}>n^{5/6}$, i.e., the term is
irrelevant for the minimum in the running time bound of the corollary unless
$k=3$.

Alternatively, we may handle levels $1,\ldots,l_0-1$ using
\lemmaref{lemma:low_levels}, determine $\tilde{G}(l_0)$, broadcast all its
edges over a BFS tree, and compute the table construction locally. This takes
\begin{equation*}
\tilde{\BO}(n^{l_0/k}+n^{2(k-l_0)}+D)
\end{equation*}
rounds w.h.p., as $\tilde{G}(l_0)$ has at most $|S_{l_0}|^2$ edges and
$|S_{l_0}|\in \BO(n^{(k-l_0)/k})$ w.h.p. For the optimal choice of $l_0$, this
results in running time
\begin{equation*}
\tilde{\BO}(n^{2/3+2/(3k)}+D).
\end{equation*}
For the special case of $k=3$, note that $l_0=2$ in fact yields running time
$\tilde{\BO}(n^{2/3}+D)\subset \tilde{\BO}((nD)^{1/2}n^{1/3})$. Hence we may
drop the additive term of $n^{1/k}D$ from the first bound when taking the
minimum.
\end{proof}
\newpage
\pagenumbering{roman}
\small
\bibliographystyle{abbrv}
\bibliography{main}

\begin{thebibliography}{10}

\bibitem{abraham08}
I.~Abraham, C.~Gavoille, D.~Malkhi, N.~Nisan, and M.~Thorup.
\newblock Compact name-independent routing with minimum stretch.
\newblock {\em ACM Trans. Algorithms}, 4(3):37:1--37:12, 2008.

\bibitem{ABLP89}
B.~Awerbuch, A.~{Bar-Noy}, N.~Linial, and D.~Peleg.
\newblock Compact distributed data structures for adaptive network routing.
\newblock In {\em Proc.\ 21st ACM Symp.\ on Theory of Computing}, pages
  230--240, May 1989.

\bibitem{baswana07}
S.~Baswana and S.~Sen.
\newblock A simple and linear time randomized algorithm for computing sparse
  spanners in weighted graphs.
\newblock {\em Random Structures and Algorithms}, 30(4):532--563, 2007.

\bibitem{Bellman}
R.~E. Bellman.
\newblock On a routing problem.
\newblock {\em Quart.\ Appl.\ Math.}, 16:87--90, 1958.

\bibitem{dassarma12}
A.~{Das Sarma}, M.~Dinitz, and G.~Pandurangan.
\newblock Efficient computation of distance sketches in distributed networks.
\newblock In {\em Proc.\ 24th ACM Symp.\ on Parallelism in Algorithms and
  Architectures}, 2012.

\bibitem{dassarma12hardness}
A.~{Das Sarma}, S.~Holzer, L.~Kor, A.~Korman, D.~Nanongkai, G.~Pandurangan,
  D.~Peleg, and R.~Wattenhofer.
\newblock {Distributed Verification and Hardness of Distributed Approximation}.
\newblock {\em SIAM Journal on Computing}, 41(5):1235--1265, 2012.

\bibitem{Ford-56}
L.~R. Ford.
\newblock Network flow theory.
\newblock Technical Report P-923, The Rand Corp., 1956.

\bibitem{FHW-12}
S.~Frischknecht, S.~Holzer, and R.~Wattenhofer.
\newblock Networks cannot compute their diameter in sublinear time.
\newblock In {\em Proc.\ 23rd ACM-SIAM Symp.\ on Discrete Algorithms}, pages
  1150--1162, 2012.

\bibitem{holzer14}
S.~Holzer and N.~Pinsker.
\newblock {Approximation of Distances and Shortest Paths in the Broadcast
  Congest Clique}.
\newblock {\em CoRR}, abs/1412.3445, 2014.

\bibitem{lenzen13}
C.~Lenzen and D.~Peleg.
\newblock Efficient distributed source detection with limited bandwidth.
\newblock In {\em Proc.\ 32nd ACM Symp.\ on Principles of Distributed
  Computing}, 2013.

\bibitem{ARPANET}
J.~McQuillan, I.~Richer, and E.~Rosen.
\newblock The new routing algorithm for the {ARPANET}.
\newblock {\em IEEE Trans. Communication}, 28(5):711--719, May 1980.

\bibitem{MQ-77}
J.~M. McQuillan and D.~C. Walden.
\newblock The {ARPANET} design decisions.
\newblock {\em Networks}, 1, 1977.

\bibitem{ospf}
J.~Moy.
\newblock {OSPF} version 2, April 1998.
\newblock Internet RFC 2328.

\bibitem{nanongkai14}
D.~Nanongkai.
\newblock {Distributed Approximation Algorithms for Weighted Shortest Paths}.
\newblock In {\em Proc.\ 46th Symposium on Theory of Computing (STOC)}, pages
  565--573, 2014.

\bibitem{lenzen12}
B.~Patt-Shamir and C.~Lenzen.
\newblock {Fast Routing Table Construction Using Small Messages [Extended
  Abstract]}.
\newblock In {\em Proc. 45th Symposium on the Theory of Computing (STOC)},
  2013.
\newblock {Full version at \url{http://arxiv.org/abs/1210.5774}.}

\bibitem{Peleg:book}
D.~Peleg.
\newblock {\em Distributed Computing: A Locality-Sensitive Approach}.
\newblock SIAM, Philadelphia, PA, 2000.

\bibitem{peleg89}
D.~Peleg and A.~A. Sch\"affer.
\newblock Graph spanners.
\newblock {\em J. Graph Theory}, 13(1):99--116, 1989.

\bibitem{peleg89trade-off}
D.~Peleg and E.~Upfal.
\newblock A trade-off between space and efficiency for routing tables.
\newblock {\em Journal of the ACM}, 36(3):510--530, 1989.

\bibitem{PU89}
D.~Peleg and E.~Upfal.
\newblock A trade-off between space and efficiency for routing tables.
\newblock {\em J. ACM}, 36(3):510--530, 1989.

\bibitem{thorup01}
M.~Thorup and U.~Zwick.
\newblock Compact routing schemes.
\newblock In {\em Proc.\ 13th ACM Symp. on Parallel Algorithms and
  Architectures}, 2001.

\bibitem{zwick01}
U.~Zwick.
\newblock Exact and approximate distances in graphs - a survey.
\newblock In {\em Proc.\ 9th European Symp.\ on Algorithms}, pages 33--48,
  2001.

\bibitem{zwick02}
U.~Zwick.
\newblock All pairs shortest paths using bridging sets and rectangular matrix
  multiplication.
\newblock {\em Journal of the {ACM}}, 49(3):289--317, 2002.

\end{thebibliography}

\end{document}